\documentclass[11pt]{article}
\usepackage[utf8]{inputenc}
\usepackage{amssymb,amsmath,latexsym}
\usepackage[english]{babel}
\usepackage{mathptmx}
\usepackage{graphicx}
\usepackage{subfig}
\usepackage{bibentry}
\usepackage{amsthm}
%%%%Liens hypertexte%%%%%
\usepackage[bookmarks=true,colorlinks=true,linkcolor=blue]{hyperref}
 \usepackage[colorlinks=true]{hyperref}
\hypersetup{urlcolor=blue, citecolor=red}
\usepackage{color,transparent}
%%%%%

%%%%%
% Page length commands go here in the preamble
\setlength{\oddsidemargin}{-0.25in} % Left margin of 1 in + 0 in = 1 in
\setlength{\textwidth}{7in}   % Right margin of 8.5 in - 1 in - 6.5 in = 1 in
\setlength{\topmargin}{-.75in}  % Top margin of 2 in -0.75 in = 1 in
\setlength{\textheight}{9.2in}  % Lower margin of 11 in - 9 in - 1 in = 1 in

 \newtheorem{thm}{Theorem}[section]
 \newtheorem{prop}{Proposition}[section]
 \newtheorem{lem}{Lemma}[section]

 \newtheorem{dfn}{Definition}[section]

 \newtheorem{rem}{Remark}[section]
 \numberwithin{equation}{section}

\setlength{\parindent}{0in}

\newcommand{\biindice}[3]%
{

\begin{array}[t]{c}
#1\\
{\scriptstyle #2}\\
{\scriptstyle #3}
\end{array}

}
\date{}
\begin{document}
\title{Bohr-Sommerfeld quantization conditions for Schr\"{o}dinger operator: the Method of Microlocal Wronskian and Gram Matrix}
\author{Abdelwaheb Ifa}
\maketitle
\centerline{University of Tunis El Manar, Faculty of Sciences of Tunis, Research Laboratory of PDE (LR03ES04)}
\centerline{\& University of Kairouan, Higher Institute of Applied Sciences and Technology of Kairouan, 3100 Kairouan, Tunisia}
\centerline{email: \url{abdelwaheb.ifa@fsm.rnu.tn}}
%\tableofcontents
%\cleardoublepage%
%\tableofcontents
%\cleardoublepage%
%\date{}%
\begin{abstract}
In this work, we present a new formulation of the well known Bohr–Sommerfeld quantization rule (BS) of order 2 for a Schr\"{o}dinger operator within the algebraic and microlocal framework of B. Helffer and J. Sj\"{o}strand; BS holds precisely when the Gram matrix consisting of scalar products of some WKB solutions with respect to the "flux norm" is not invertible. This condition is obtained using the microlocal Wronskian and does not rely on traditional matching techniques. It is simplified by using action-angle variables. The interest of this procedure lies in its possible generalization to matrixvalued Hamiltonians, like Bogoliubov-de Gennes Hamiltonian.
\end{abstract}
\section{Introduction}\label{S1}
\paragraph{}
Bohr-Sommerfeld quantization rule, in its first formulation, allows to compute the energy levels $E$ of a particle in a one-dimensional potential well, its dynamics being described by the semi-classical Schr\"{o}dinger operator
\begin{equation}\label{SchrodinGERbsqrifa}
P(x,hD_{x})=(hD_{x})^{2}+V(x),\quad D_{x}=-i\,\partial_{x}.
\end{equation}
It is given at first order in $h$ by the well known formula
\begin{equation}\label{bsqrifa}
\frac{1}{2\pi h}\,\oint_{\gamma_E}\xi(x)\,dx=n+\frac{1}{2}
\end{equation}
Here $\xi(x)=\big(E-V(x)\big)^{1/2}$ denotes the momentum of the particle on its orbit $\gamma_E\subset T^*\mathbb{R}$ above the potential well $\{V(x)\leq E\}$, $n$ is an integer and the integral is computed over $\gamma_E$ in the phase space $T^{*}\mathbb{R}$.

By the implicit function theorem, we then find $E=E_n(h)$. In other words, the number of wavelengths (associated with the particle along $\gamma_E$ by de Broglie correspondence) must be an integer plus 1/2, called \emph{Maslov correction}.

This formula can be obtained in a many ways, at any order in $h$, and also admits various generalizations, for example when $P$ is the Weyl quantization of a Hamiltonian which has compact energy levels, or also for a system of such operators.
\paragraph{}
In this work we propose to set up a new method for deriving Bohr-Sommerfeld quantization rule, based on the algebraic and microlocal framework developed by B. Helffer and J. Sj\"ostrand, known as the \emph{Grushin operator}.

It consists in constructing, for a given energy $E$, local sections over $\gamma_E$ of the vector bundle $\mathcal{K}_h(E)$ of solutions $\psi_E$ to Schr\"{o}dinger equation. This bundle is endowed with a Hermitian canonical structure and a gauge group.
\paragraph{}
Let $D(E;h)$ be the determinant of the Gram matrix (for the Hermitian structure) of two branches of $\psi_E$ in a suitable basis of the co-kernel of $P$. Then Bohr-Sommerfeld quantization rule expresses the fact that $\mathcal{K}_h(E)$ is trivialized precisely when $D(E;h)=0$, so that $D(E;h)$ can be identified with the Jost function of the eigenvalue problem. Using this method, we thus recover the Bohr–Sommerfeld conditions at different orders in $h$ for Schrödinger-type operators.
\paragraph{}
We consider the semiclassical spectrum of the Schrödinger operator (\ref{SchrodinGERbsqrifa}) near the energy level $E_0<\displaystyle \liminf_{|x|\rightarrow +\infty}V(x)$. We assume that the classically allowed region is $\{V\leq E\}=[x'_E,x_E]$, where the endpoints \(x'_E\) and \(x_E\) are simple turning points, i.e. $V(x'_{E})=V(x_{E})=E$, with $V'(x'_E)<0$ and $V'(x_E)>0$.
Our aim is to derive the Bohr-Sommerfeld quantization rule for the operator \(P\), corresponding to a single potential well, by means of the microlocal Wronskian and the Gram matrix. In contrast with the traditional approach, our construction does not rely on matching techniques.
\section{The microlocal Wronskian}\label{S2}
\paragraph{}
Let $p(x,\xi;h)$ be a smooth real classical Hamiltonian on $T^*\mathbb{R}$; we will assume that $p$ belongs to the space of symbols $S^0(m)$ for some order function $m$ \big(for example $m(x,\xi)=(1+|\xi|^{2})^{M}$\big) with
$$S^{N}(m)=\big\{p\in C^{\infty}(T^{*}\mathbb{R}): \forall\,\alpha\in \mathbb{N}^{2},\,\exists\,C_{\alpha}>0,\quad |\partial_{(x,\xi)}^\alpha p(x,\xi;h)|\leq C_{\alpha}\,h^{N}\,m(x,\xi)\big\}$$
This allows to take Weyl quantization of $p$
\begin{equation}\label{A01WEyLOO}
P(x,hD_x;h)u(x; h)=(2\,\pi\,h)^{-1}\,\int\int e^{\frac{i}{h}\,(x-y)\,\eta}\,p\big(\frac{x+y}{2},\eta;h\big)\,u(y)\,dy\,d\eta.
\end{equation}
so that $P(x,hD_{x};h)$ is self-adjoint. We also assume that $p+i$ is elliptic (in the classical sense). We call as usual $p_{0}$ the principal symbol, and $p_{1}$ the sub-principal symbol; in case of Schrödinger operator (\ref{SchrodinGERbsqrifa}), $p(x,\xi;h)=p_{0}(x,\xi)=\xi^{2}+V(x)$. We make the geometric assumption (H) of~\cite{CdV1}:
\paragraph{}
Fix some compact interval $I=[E_-, E_+]$, $E_{-}<E_{+}$ and assume that there exists a topological ring $\mathcal{A}\subset p_0^{-1}(I)$ such that $\partial \mathcal{A}=A_+\cup A_-$ with $A_{\pm}$ a connected component of $p_0^{-1}(E_{\pm})$. Assume also that $p_{0}$ has no critical point in $\mathcal{A}$, and $A_{-}$ is included in the disk bounded by $A_{+}$ (if this is not the case, we can always change $p$ to $-p$). We define the microlocal well $W$ as the disk bounded by par $A_{+}$. These conditions guarantee that the spectrum of $P$ in $I$ is discrete and can be described semiclassically.
\paragraph{}
For each \(E\in I\), let \(\gamma_E \subset W\) be a periodic orbit in the energy surface \(\{ p_0(x,\xi)=E\}\), so that \(\gamma_E\) is an embedded Lagrangian manifold.
\paragraph{}
Let $\mathcal{K}_{h}^{N}(E)$ be the microlocal kernel of $P-E$ of order $N$, i.e. the space of local solutions of $(P-E)u=\mathcal{O}(h^{N+1})$ in the distributional sense, microlocalized on $\gamma_{E}$. This is a smooth complex vector bundle over $\pi_{x}(\gamma_{E})$. Here we address the problem of finding the set of $E=E(h)$ such that $\mathcal{K}_{h}^{N}(E)$ contains a global section, i.e. of constructing a sequence of quasi-modes $\big(u_{n}(h), E_{n}(h)\big)$ of a given order $N$. As usual we denote by $\mathcal{K}_{h}(E)$ the microlocal kernel of $P-E$ mod $\mathcal{O}(h^{\infty})$; since the distinction between $\mathcal{K}_{h}^{N}(E)$ and $\mathcal{K}_{h}(E)$ plays no important role here, we shall content to write $\mathcal{K}_{h}(E)$.
\paragraph{}
The best algebraic and microlocal framework for computing 1-D quantization rules in the self-adjoint case, developed in the fundamental works of \cite{JO22025}, \cite{HARPE0RIII}, is based on Fredholm theory and the classical \emph{positive commutator method}, which involves conservation of a quantity called \emph{quantum flux}.
\paragraph{}
Bohr-Sommerfeld quantization rules are derived by constructing quasi-modes using the WKB approximation along a closed Lagrangian manifold \(\Lambda_E\subset \{p_0=E\}\), i.e. a periodic orbit of the Hamiltonian vector field \(H_p\) with energy \(E\). This construction is local and depends on the rank of the projection \(\Lambda_E \rightarrow\mathbb{R}_x\).
\paragraph{}
Thus, the set \(K_h(E)\) of microlocal solutions to \((P-E)u=0\) along \(\Lambda_E\) can be seen as a bundle over $\mathbb{R}$ with a compact base, corresponding to the classically allowed region at energy \(E\). The eigenvalues \(E_n(h)\) are then determined by the condition that the global quasi-mode obtained by gluing local WKB solutions along \(\Lambda_E\) is singlevalued, i.e. that \(K_h(E)\) has trivial holonomy.
\paragraph{}
Assuming \(\Lambda_E\) is smoothly embedded in \(T^*\mathbb{R}\), it can always be parametrized by a non-degenerate phase function. Of particular interest are the \emph{focal points}, i.e. critical points of the phase functions, which are responsible for the change in Maslov index. A point \(a_{E}=(x_E, \xi_E)\in\Lambda_E\) is a focal point if \(\Lambda_E\) "turns vertical" at \(a_{E}\), meaning that the tangent space \(T_{a_{E}}\Lambda_E\) is no longer transverse to the fiber \(x=\text{const}\). in \(T^*\mathbb{R}\).
\paragraph{}
In any case however, \(\Lambda_E\) can locally be parametrized either by a phase function \(S(x)\) (spatial representation) or \(\widetilde{S}(\xi)\) (Fourier representation). We fix an orientation on \(\Lambda_E\) and for any point \(a\in\Lambda_E\) (not necessarily a focal point), we denote by \(\rho=\pm1\) the oriented segments near \(a\). Let \(\chi^{a}\in C_0^\infty(\mathbb{R}^2)\) be a smooth cut-off function equal to 1 near \(a\), and \(\omega^{a}_{\rho}\) a small neighborhood of \(\operatorname{supp}[P,\chi^{a}]\cap\Lambda_E\) near \(\rho\).
\begin{dfn}\label{A02}
Let $P$ be self-adjoint, and $u^{a}, v^{a}\in K_h(E)$ be supported on $\Lambda_E$. We call
\begin{equation}\label{A03}
\mathcal{W}^{a}_{\rho}\big(u^{a},\overline{v^{a}}\big)=\big(\frac{i}{h}\,[P,\chi^{a}]_{\rho}u^{a}|v^{a}\big)
\end{equation}
the \emph{microlocal Wronskian} of $(u^{a}, \overline{v^{a}})$ in $\omega^{a}_{\rho}$. Here $\displaystyle\frac{i}{h}[P,\chi^{a}]_{\rho}$ denotes the part of the commutator supported microlocally on $\omega^{a}_{\rho}$.
\end{dfn}
To understand that terminology, let $P(x,hD_{x})=(hD_{x})^{2}+V(x)$, $x_{E}=0$ and change $\chi$ to Heaviside unit step-function $\chi(x)$, depending on $x$ alone. Then in distributional sense, we have $\displaystyle\frac{i}{h}[P,\chi]=-i\,h\,\chi''+2\,\chi'\,hD_x=-i\,h\,\delta'+2\,\delta\,hD_x$, where $\delta$ denotes the Dirac measure at $0$, and $\delta'$ its derivative, so that $\big(\displaystyle {i\over h}[P,\chi]u|u\big)=-ih\big(u'(0)\overline{u(0)}-u(0)\overline{u'(0)}\big)$ is the usual Wronskian of $(u, \overline{u})$.
\begin{prop}
Let $u^{a}, v^{a}\in K_{h}(E)$, and denote by $\widehat{u}$ the $h$-Fourier (unitary) transform of $u$. Then mod $\mathcal{O}(h^{\infty})$:
\begin{equation}\label{A04}
\mathcal{W}=\big(\frac{i}{h}\,[P,\chi^{a}] u^{a}|v^{a}\big)=\big(\frac{i}{h}\,[P,\chi^{a}] \widehat u^{a}|\widehat v^{a}\big)=0
\end{equation}
\begin{equation}\label{A05}
\mathcal{W}^{a}_{+}\big(u^{a},\overline{v^{a}}\big)=-\mathcal{W}^{a}_{-}\big(u^{a},\overline{v^{a}}\big)
\end{equation}
Moreover, $\mathcal{W}^{a}_{\rho}\big(u^{a}, \overline{v^{a}}\big)$ does not depend on the choice of $\chi^{a}$ above.
\end{prop}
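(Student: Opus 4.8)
The plan is to establish the three assertions in order, exploiting self-adjointness of $P$ and the fact that $u^a,v^a$ are microlocal solutions of $(P-E)u=\mathcal{O}(h^\infty)$ on $\Lambda_E$.

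First I would prove \eqref{A04}. Write $\frac{i}{h}[P,\chi^a]u^a = \frac{i}{h}(P-E)(\chi^a u^a) - \frac{i}{h}\chi^a(P-E)u^a$. Since $(P-E)u^a=\mathcal{O}(h^\infty)$ microlocally near $\Lambda_E$ and $\chi^a$ is compactly supported with $u^a$ microlocalized on $\Lambda_E$, the second term contributes $\mathcal{O}(h^\infty)$ in the pairing. Hence, mod $\mathcal{O}(h^\infty)$,
\begin{equation*}
\mathcal{W} = \bigl(\tfrac{i}{h}(P-E)(\chi^a u^a)\,\big|\,v^a\bigr) = \bigl(\chi^a u^a\,\big|\,\tfrac{-i}{h}(P-E)v^a\bigr),
\end{equation*}
using that $P-E$ is self-adjoint and $\chi^a u^a$ is a genuine (compactly supported) function, so the integration by parts is legitimate with no boundary contribution. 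Again $(P-E)v^a=\mathcal{O}(h^\infty)$ microlocally on a neighborhood containing $\operatorname{supp}\chi^a\cap\Lambda_E$, so this last pairing is $\mathcal{O}(h^\infty)$. The identity with $\widehat u^a,\widehat v^a$ follows because the $h$-Fourier transform is unitary and conjugates $P(x,hD_x)$ to $\widehat P(\xi,-hD_\xi)$, which is again self-adjoint with the same spectrum; the commutator $\frac{i}{h}[P,\chi^a]$ transforms accordingly and the same argument applies verbatim.

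Next, \eqref{A05} is a consequence of \eqref{A04} together with the localization of $[P,\chi^a]$. The symbol of $[P,\chi^a]$ is supported in $\operatorname{supp}(d\chi^a)$, and on $\Lambda_E$ this support meets only the two oriented segments $\rho=\pm1$ near $a$ (shrinking $\chi^a$ and choosing $\omega^a_\rho$ appropriately). Therefore, microlocally on $\Lambda_E$,
\begin{equation*}
\tfrac{i}{h}[P,\chi^a] = \tfrac{i}{h}[P,\chi^a]_+ + \tfrac{i}{h}[P,\chi^a]_- + \mathcal{O}(h^\infty),
\end{equation*}
and pairing with $u^a,v^a$ and invoking $\mathcal{W}=0$ from \eqref{A04} gives $\mathcal{W}^a_+ + \mathcal{W}^a_- = 0$, i.e. \eqref{A05}. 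Here I would be slightly careful that $u^a,v^a$ being microlocalized precisely on $\Lambda_E$ kills any contribution of $d\chi^a$ away from $\Lambda_E$.

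Finally, independence of the choice of $\chi^a$: if $\chi_1^a,\chi_2^a$ are two admissible cut-offs, both equal to $1$ near $a$, then $\chi_1^a-\chi_2^a$ vanishes near $a$, so $\operatorname{supp} d(\chi_1^a-\chi_2^a)$ is a compact set avoiding $a$; splitting it into its components near $\rho=\pm1$ and repeating the argument of \eqref{A04}–\eqref{A05} on each piece shows each $\mathcal{W}^a_\rho$ changes by $\mathcal{O}(h^\infty)$ — intuitively because $\frac{i}{h}[P,\chi^a]_\rho u^a$ is, mod $\mathcal{O}(h^\infty)$, a fixed microlocal object determined by the jump of $u^a$ across $a$ along the orbit, not by the particular profile of $\chi^a$. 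I expect the main obstacle to be bookkeeping the microlocal supports rigorously — ensuring that "the part of the commutator supported microlocally on $\omega^a_\rho$" is well-defined up to $\mathcal{O}(h^\infty)$ and that cutting the commutator into $\rho=\pm1$ pieces does not lose or double-count contributions — rather than any deep analytic difficulty; the self-adjoint integration by parts and unitarity of the Fourier transform are routine once the supports are controlled.
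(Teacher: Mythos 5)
Your proof is correct and rests on the same essential ingredients as the paper's: the commutator identity $[P,\chi^a]=(P-E)\chi^a-\chi^a(P-E)$ combined with self-adjointness of $P$, Plancherel for the Fourier-side version, and the splitting of the commutator across the two segments $\rho=\pm1$. Your abstract formulation dispenses with the paper's explicit case distinction between focal and non-focal points (the commutator manipulation works uniformly once $L^2$ regularity and microlocal supports are controlled), and you also sketch the $\chi^a$-independence, which the paper asserts without proof.
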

\begin{proof}
Since $u^{a}, v^{a}\in K_{h}(E)$ are distributions in $L^{2}$, the first equality (\ref{A04}) follows from the Plancherel formula and the regularity of microlocal solutions in $L^{2}$, $p+i$ being elliptic. If $a$ is not a focal point,  $u^{a}, v^{a}$ are smooth WKB solutions near $a$, so we can expand the commutator in $\mathcal{W}=\big(\frac{i}{h}\,[P,\chi^{a}] u^{a}|v^{a}\big)$ and use that $P$ is self-adjoint to show that $\mathcal{W}=\mathcal{O}(h^{\infty})$. If $a$ is a focal point, $u^{a}, v^{a}$ are smooth WKB solutions in Fourier representation, so again $\mathcal{W}=\mathcal{O}(h^{\infty})$. Then (\ref{A05}) follows from Definition \ref{A02}.
\end{proof}
\section{Semiclassical Quantization Rules, Quasi-Modes, and the Flux Norm}
For simplicity, we assume that the principal symbol $p_{0}$ of $P$ has only two focal points along $\gamma_{E}$. However, it is clear from the matching of microlocal solutions that the result does not depend on this assumption. In fact, the Bohr–Sommerfeld condition depends on the geometry of $\gamma_{E}$ only through its Maslov index, which equals 2 when $\gamma_{E}$ is an embedded Lagrangian submanifold. The case where $\gamma_{E}$ is not a submanifold (homeomorphic to a figure eight), for which the Maslov index is 0, is not considered here, see \cite{JO22025} or \cite{YCVPAR1}. (Recall that in dimension 1, the Maslov index, defined modulo 4, is always even, hence equal to 0 or 2).

Our main result is the following:
\begin{thm}
With the notations and hypotheses stated above, for $h>0$ small enough there exists a smooth function $\mathcal{S}_{h}:I\rightarrow \mathbb{R}$, called the
semi-classical action, with asymptotic expansion $\mathcal{S}_{h}(E)=\displaystyle\sum_{j=0}^{\infty}h^{j}\,S_{j}(E)$ such that $E\in I$ is an eigenvalue of $P$ iff it satisfies the implicit equation (Bohr-Sommerfeld quantization condition) $\mathcal{S}_{h}(E)=2\pi n h$,\,$n\in \mathbb{Z}$. The semi-classical action consists of:
\begin{itemize}
\item the classical action along $\gamma_{E}$
$$S_0(E)=\oint_{\gamma_E}\xi(x)\,dx=\displaystyle\int\int_{\{p_0\leq E\}\cap W}\,d\xi\wedge\,dx,$$
\item the first-order correction, including the Maslov index $S_1(E)=-\pi$,
\item the second order term
 $$S_2(E)=\frac{1}{12}\,\frac{d}{dE}\int_{\gamma_E}V''\big(x(t)\big)\,dt.$$
\end{itemize}
\end{thm}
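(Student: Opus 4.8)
\emph{Proof proposal.} The plan is to build a global section of $\mathcal{K}_h(E)$ by gluing local WKB data along $\gamma_E$, and to identify the obstruction with the vanishing of the Gram determinant $D(E;h)$ computed in the flux inner product. I would first cover $\gamma_E$ by two regular arcs $\Omega_1,\Omega_2$, on which $\pi_x:\gamma_E\to\mathbb{R}_x$ is a diffeomorphism, and two small neighborhoods $U_{a'},U_a$ of the focal points $a'=(x'_E,0)$ and $a=(x_E,0)$, on which the Fourier projection is a diffeomorphism. On each $\Omega_j$ the microlocal kernel is a line spanned by a WKB half-density $u_j=b_j(x;h)\exp\!\big(\tfrac{i}{h}\int^x\xi\,dx\big)$ with $\xi=\pm(E-V)^{1/2}$, and over $U_{a'},U_a$ by analogous WKB solutions in the $\xi$-variable. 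I would then solve the transport equations for the $b_j$ through order $h^2$: since $P=(hD_x)^2+V$ has vanishing subprincipal symbol, the order-$h$ term of $\log b_j$ is purely imaginary (it gives the half-density $|\xi|^{-1/2}$, contributing no phase), whereas the order-$h^2$ term produces a genuine phase whose primitive I would record explicitly in terms of $V,V',V''$. Integrating this phase correction along $\Omega_1$ and $\Omega_2$ and adding, the accumulated quantity is $h^2S_2(E)+O(h^3)$, where after passing to the orbit parametrization $dx=dx/(2\xi)=dt$ one recognizes $S_2(E)=\tfrac1{12}\tfrac{d}{dE}\int_{\gamma_E}V''(x(t))\,dt$; the leading phase accumulated along the two arcs is $\oint_{\gamma_E}\xi\,dx=S_0(E)$.

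The heart of the method is the connection of the branch on $\Omega_1$ to the branch on $\Omega_2$ across each focal point, done with the microlocal Wronskian instead of the Airy connection formula. Near $a$ one writes the microlocal solution $u^a$, which is WKB in the Fourier representation, and computes by stationary phase its two spatial branches on the classically allowed side $x<x_E$; the identity $\mathcal{W}^a_+(u^a,\overline{u^a})=-\mathcal{W}^a_-(u^a,\overline{u^a})$ from the Proposition, evaluated in the $x$-representation where each side is a smooth WKB solution, pins down the modulus of the transition coefficient as conservation of quantum flux, while the relative phase is the Gaussian phase $e^{\mp i\pi/4}$ coming from the two stationary points, together with $O(h)$ corrections. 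The two focal points thus contribute a total phase $-\pi$, which is exactly $S_1=-\pi$ (the Maslov index $2$ of the embedded $\gamma_E$), and the $O(h)$ corrections there are universal (Airy model) and, once the two turning points are combined, do not alter the order-$h^2$ coefficient.

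I would then assemble the transition coefficients into a holonomy: after normalizing every local generator to have flux norm $1$, each transition coefficient becomes a pure phase, and a global single-valued section of $\mathcal{K}_h(E)$ exists iff the ordered product of these phases around $\gamma_E$ equals $1$, i.e. iff $e^{\frac{i}{h}\mathcal{S}_h(E)}=1$ with $\mathcal{S}_h(E)=S_0(E)+hS_1(E)+h^2S_2(E)+\cdots$. Equivalently, a suitable Grushin problem with finite-rank data supported near $a$ and $a'$ reduces $(P-E)u=0$ to the invertibility of a $2\times2$ effective matrix $E_{-+}(E;h)$ built from the two WKB branches; forming the Gram matrix of these branches in the flux inner product one gets $D(E;h)=\det E_{-+}(E;h)$, which vanishes precisely when $\sin\!\big(\tfrac1{2h}\mathcal{S}_h(E)\big)=0$, so that $\mathcal{K}_h(E)$ is trivialized exactly when $D(E;h)=0$. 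Smoothness of $\mathcal{S}_h$ on $I$ is clear from the construction, and since $S_0'(E)=\oint_{\gamma_E}dt$ is the (positive) period of $\gamma_E$, the implicit function theorem gives a discrete family $E=E_n(h)$ solving $\mathcal{S}_h(E)=2\pi nh$; combined with the spectral discreteness guaranteed by hypothesis (H), these are exactly the eigenvalues of $P$ in $I$, with $S_0,S_1,S_2$ as in Steps~1--2.

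I expect the main obstacle to be the second step: controlling the focal-point connection to precision $O(h^2)$ purely through the microlocal Wronskian, one must check that the $O(h)$ data of the stationary-phase expansions at the two turning points conspire so that no spurious $h^2$ term enters $\mathcal{S}_h$ beyond $S_2$, and that the flux normalizations are mutually compatible on the overlaps $\Omega_j\cap U_a$. A secondary, more routine difficulty is the bookkeeping that converts the $h^2$ amplitude and phase corrections accumulated along $\Omega_1,\Omega_2$ into the compact intrinsic form $\tfrac1{12}\tfrac{d}{dE}\int_{\gamma_E}V''\,dt$; this is cleanest in action--angle variables, where the transport equation becomes a scalar ODE along the Hamiltonian flow.
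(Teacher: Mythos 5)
Your proposal follows essentially the same route as the paper: Fourier-representation WKB Ansatz near each focal point, stationary phase to return to the $x$-representation (producing the $e^{\pm i\pi/4}$ Maslov factors), normalization of each local generator via the microlocal Wronskian/flux norm, and then the vanishing of the Gram determinant of the two normalized branches against the co-kernel data $F^{a}_{+}-F^{a}_{-},\ F^{a'}_{+}-F^{a'}_{-}$ as the quantization condition, with $S_{2}$ identified by reducing the imaginary part of the order-$h$ amplitude modulo exact forms to the period $\tfrac{1}{12}\tfrac{d}{dE}\int_{\gamma_E}V''\,dt$. The only genuine discrepancy is cosmetic: your ``holonomy'' phrasing (transition coefficients as pure phases whose ordered product is $1$) glosses over the fact that the connection across a focal point mixes the two branches, which is why the determinant takes the form $-\cos^{2}\big(\tfrac{A_{-}-A_{+}}{2h}\big)$ rather than a single exponential; your equivalent statement $\sin\big(\tfrac{1}{2h}\mathcal{S}_{h}\big)=0$ does match the paper's condition once the Maslov term $-\pi h$ is absorbed into $\mathcal{S}_{h}$.
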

\subsection{Normalized Quasi-Modes}
We first recall H\"{o}rmander's asymptotic stationary phase theorem (see e.g. \cite{LARsHORmander}, Theorem 7.7.5):

Let $f:\mathbb{R}^{d}\rightarrow \mathbb{C}$ be a function such that $\mathrm{Im}\big(f(x)\big)\geq 0$, and suppose that $f$ has a non-degenerate critical point at $x_{0}$. Then, we have the asymptotic expansion:
\begin{equation}\label{Phasestat0}
\int_{\mathbb{R}^{d}}e^{\frac{i}{h}\,f(x)}\,u(x)\,dx \sim e^{\frac{i}{h}\,\varphi(x_{0})}\,\bigg(\det\big(\frac{\varphi''(x_{0})}{2\,i\,\pi\,h}\big)\bigg)^{-\frac{1}{2}}\,\sum_{j}h^{j}\,L_{j} u (x_{0})
\end{equation}
where $L_{j}$ are linear differential operators, with $L_{0}u(x_{0})=u(x_{0})$ and in particular:
\begin{equation}\label{TTL551A06}
L_{1}u(x_{0})=\sum_{n=0}^{2}\frac{2^{-(n+1)}}{i\,n!\,(n+1)!}\,\big<\big(f''(x_{0})\big)^{-1}\,D_{x},
D_{x}\big>^{n+1}\,(\Phi_{x_{0}}^{n}\,u)(x_{0})
\end{equation}
with:
\begin{equation}\label{A07}
\Phi_{x_{0}}(x)=f(x)-f(x_{0})-\frac{1}{2}\,\big<f''(x_{0}).\,(x-x_{0}), x-x_{0}\big>
\end{equation}
We note that $\Phi_{x_{0}}$ vanishes to order 3 at $x_{0}$ \big(i.e., $\Phi_{x_{0}}(x_{0})=0,\;\Phi'_{x_{0}}(x_{0})=0,\;\Phi''_{x_{0}}(x_{0})=0$\big).
\subsubsection{Asymptotic solutions in the Fourier representation}
Near the focal point $a_E=(x_E ,0)$, the energy surface $C_E=\{\,\xi^2+V(x)=E\,\}$ is the graph of a function \(x=-\psi'(\xi)\). Let \(\xi \in J\), where \(J=[-A,A]\) is an interval containing \(0\), and assume \(x_E>x_1=-\psi'(A)\). It is therefore natural to try a WKB form for \(\mathcal{F}_h u=\widehat{u}\) (the semiclassical Fourier transform of \(u\)) in \(\widetilde{I}=[x_1, x_E + 1]\), namely
\begin{equation}\label{MASLOV2}
u(x)=(2\,\pi\,h)^{-\frac{1}{2}}\,\int e^{\frac{i}{h}\,\big(x\,\xi+\psi(\xi)\big)}\,b(\xi;h)\,d\xi
\end{equation}
with \(b = b(\cdot;h) \in C_0^\infty(\mathbb{R})\) a symbol in \(h\). When \(x\) is near the turning point $x_{E}$, we must recover the WKB Ansatz, or more precisely, a linear combination of \(u_+\) and \(u_-\). A straightforward computation shows that
\begin{equation}\label{MASLOV3}
\big(P(x,hD_{x})-E\big)u(x;h)=(2\,\pi\,h)^{-\frac{1}{2}}\,\int e^{\frac{i}{h}\,\big(x\,\xi+\psi(\xi)\big)}\,\big(p_{0}(x,\xi)-E\big)\,b(\xi;h)\,d\xi
\end{equation}
where $p_0(x,\xi)=\xi^2+V(x)$ denotes the principal symbol of the Schr\"{o}dinger operator $P$.

We now look for $b(\xi;h)\sim b_{0}(\xi)+h\,b_{1}(\xi)+h^{2}\,b_{2}(\xi)+\cdots$ a classical elliptic symbol with \(b_0(0)\neq 0\), such that there exists $a(x,\xi;h)\sim a_{0}(x,\xi)+h\,a_{1}(x,\xi)+h^{2}\,a_{2}(x,\xi)+\cdots$ satisfying
\begin{equation}\label{E00}
e^{\frac{i}{h}\,\big(x\,\xi+\psi(\xi)\big)}\,\big(p_{0}(x,\xi)-E\big)\,b(\xi;h)=
h\,D_{\xi}\Big[e^{\frac{i}{h}\,\big(x\,\xi+\psi(\xi)\big)}\,a(x,\xi;h)\Big]
\end{equation}
Therefore, for $\xi$ near $0$, $x$ near $x_E$, and $E$ near $E_0$, it suffices to solve the equations
\begin{equation}\label{E1}
\big(p_{0}(x,\xi)-E\big)\,b_{0}(\xi)=\big(x+\psi'(\xi)\big)\,a_{0}(x,\xi)
\end{equation}
\begin{equation}\label{E2}
\big(p_{0}(x,\xi)-E\big)\,
b_{1}(\xi)=\big(x+\psi'(\xi)\big)\,a_{1}(x,\xi)+D_{\xi}\,a_{0}(x,\xi)
\end{equation}
\begin{equation}\label{E3}
\big(p_{0}(x,\xi)-E\big)\,
b_{2}(\xi)=\big(x+\psi'(\xi)\big)\,a_{2}(x,\xi)+D_{\xi}\,a_{1}(x,\xi)
\end{equation}
Here, we have collected the terms according to the powers of \(h\) in (\ref{E00}); equation (\ref{E1}) is obtained by annihilating the term of order \(h^0\) in (\ref{E00}), equation (\ref{E2}) is obtained by annihilating the term of order \(h^1\), and equation (\ref{E3}) by annihilating the term of order \(h^2\).

Define $\lambda(x,\xi)$ by
\begin{equation}\label{029E4}
\lambda(x,\xi):=\frac{p_{0}(x,\xi)-E}{x+\psi'(\xi)}
\end{equation}
From (\ref{029E4}), we deduce that
\begin{equation}\label{E5}
\lambda\big(-\psi'(\xi),\xi\big)=\big(\partial_{x}\,p_{0}\big)\big(-\psi'(\xi),\xi\big)=V'\big(-\psi'(\xi)\big):=\alpha(\xi)
\end{equation}
Differentiating with respect to \(\xi\) both sides of the Hamilton–Jacobi equation
\begin{equation}\label{Jacob}
p_{0}\big(-\psi'(\xi),\xi\big)=E
\end{equation}
we obtain
\begin{equation}\label{E6}
\psi''(\xi)=\frac{\big(\partial_{\xi}\,p_{0}\big)\big(-\psi'(\xi),\xi\big)}{\alpha(\xi)}=\frac{2\,\xi}{\alpha(\xi)}
\end{equation}
which vanishes at \(\xi_E=0\). For a given \(b_0\), the unique solution of (\ref{E1}) is
\[a_{0}(x,\xi)=\lambda(x,\xi)\,b_{0}(\xi).\]
To solve (\ref{E2}), it is necessary and sufficient that
\begin{equation}\label{E888}
\big(D_{\xi}\,a_{0}\big)\big(-\psi'(\xi),\xi\big)=0
\end{equation}
Equation (\ref{E888}) is equivalent to
\begin{equation}\label{E6669}
\big(\partial_{\xi}\,\lambda\big)\big(-\psi'(\xi),\xi\big)\,b_{0}(\xi)+\alpha(\xi)\,b'_{0}(\xi)=0
\end{equation}
From (\ref{029E4}), one finds
\begin{equation}\label{RelaDx1999DxLamda}
\big(\partial_{\xi}\,\lambda\big)\big(-\psi'(\xi),\xi\big)=-\psi''(\xi)\,\big(\partial_{x}\,\lambda\big)\big(-\psi'(\xi),\xi\big)
\end{equation}
A direct computation using (\ref{029E4}) shows that
\begin{equation}\label{E8}
\big(\partial_{x}\,\lambda\big)\big(-\psi'(\xi),\xi\big)=
\frac{1}{2}\,\big(\frac{\partial^{2} p_{0}}{\partial x^{2}}\big)\big(-\psi'(\xi),\xi\big)=\frac{1}{2}\,V''\big(-\psi'(\xi)\big)
\end{equation}
Therefore,
\begin{equation}\label{}
\big(\partial_{\xi}\,\lambda\big)\big(-\psi'(\xi),\xi\big)=-\frac{1}{2}\,\psi''(\xi)\,V''\big(-\psi'(\xi)\big)=\frac{1}{2}\,\alpha'(\xi)
\end{equation}
Thus, equation (\ref{E6669}) is equivalent to
\begin{equation}\label{EQDIBOOO}
\frac{1}{2}\,\alpha'(\xi)\,b_{0}(\xi)+\alpha(\xi)\,b'_{0}(\xi)=0
\end{equation}
As the projection onto $\mathbb{R}_{\xi}$ of the classical dynamics is given by $-\displaystyle\alpha\,\frac{d}{d\xi}$, the transport equation (\ref{EQDIBOOO}) can be interpreted as the invariance of $b_{0}^{2}\,|d\xi|$ under the dynamics.

The differential equation (\ref{EQDIBOOO}) is solvable by quadrature, and its solutions are functions of the form
\begin{equation}\label{QUADBZEROEQDIBOOO}
b_{0}(\xi)=C_{0}\,|\alpha(\xi)|^{-1/2}
\end{equation}
To solve equation (\ref{E3}), it is necessary and sufficient that
\begin{equation}\label{E13}
\big(D_{\xi}\,a_{1}\big)\big(-\psi'(\xi),\xi\big)=0
\end{equation}
Equation (\ref{E2}) yields
\begin{equation}\label{E14}
a_{1}(x,\xi)=\lambda(x,\xi)\,b_{1}(\xi)+\lambda_{0}(x,\xi)
\end{equation}
where we have set
\begin{equation}\label{ELamdaZEroo15}
\lambda_{0}(x,\xi)=-\frac{D_{\xi}\,a_{0}(x,\xi)}{x+\psi'(\xi)}=\frac{i\,\partial_{\xi}\,a_{0}(x,\xi)}{x+\psi'(\xi)}
\end{equation}
Equation (\ref{E13}) is equivalent to
\begin{equation}\label{T00RANSPORTB636}
\alpha(\xi)\,b'_{1}(\xi)+\frac{1}{2}\,\alpha'(\xi)\,b_{1}(\xi)=-(\partial_{\xi}\lambda_{0})\big(-\psi'(\xi),\xi\big)
\end{equation}
A simple computation shows that
\begin{equation}\label{E22}
\lambda_{0}\big(-\psi'(\xi),\xi\big)=-
i\,b_{0}(\xi)\,\big[\frac{\psi''(\xi)}{6}\frac{\partial^{3} p_{0}}{\partial x^{3}}
+\frac{\alpha'}{4\,\alpha}\,\frac{\partial^{2} p_{0}}{\partial x^{2}}\big]_{x=-\psi'(\xi)}
\end{equation}
\begin{equation}\label{E23}
\big(\partial_{x}\lambda_{0}\big)\big(-\psi'(\xi),\xi\big)=-
i\,\frac{b_{0}(\xi)}{12}\,\Big(
\frac{\psi''(\xi)}{2}\,\frac{\partial^{4} p_{0}}{\partial x^{4}}+
\frac{\alpha'(\xi)}{\alpha(\xi)}\,\frac{\partial^{3} p_{0}}{\partial x^{3}}\Big)_{x=-\psi'(\xi)}
\end{equation}
Differentiating both sides of equation (\ref{E22}) with respect to $\xi$, we obtain
\begin{align}
\big[-\psi''(\xi)\,\partial_{x}\lambda_{0}+\partial_{\xi}\lambda_{0}\big]_{x=-\psi'(\xi)}
=i\,b_{0}(\xi)\,\Big[\frac{(\psi'')^{2}}{6}\,\frac{\partial^{4}p_{0}}{\partial x^{4}}+\big(\frac{2\theta_{0}\,\psi''}{3}-\frac{\psi'''}{6}\big)\,
\frac{\partial^{3}p_{0}}{\partial x^{3}}+\frac{1}{2}\,(\theta_{0}^{2}-\theta'_{0})\,\frac{\partial^{2}p_{0}}{\partial x^{2}}\Big]_{x=-\psi'(\xi)}
\end{align}
From formula (\ref{E23}), it follows that
\begin{equation}\label{Dxilamda7ZErocritnnnn}
\begin{aligned}
(\partial_{\xi}\lambda_{0})\big(-\psi'(\xi),\xi\big)&=i\,b_{0}(\xi)\,
\Big[\frac{(\psi'')^{2}}{8}\,\frac{\partial^{4}p_{0}}{\partial x^{4}}+\big(\frac{\theta_{0}\,\psi''}{2}-\frac{\psi'''}{6}\big)\,
\frac{\partial^{3}p_{0}}{\partial x^{3}}+\frac{1}{2}\,(\theta_{0}^{2}-\theta'_{0})\,\frac{\partial^{2}p_{0}}{\partial x^{2}}\Big]_{x=-\psi'(\xi)}\ \\
&=i\,C_{0}\,|\alpha(\xi)|^{-1/2}\,\Big[\frac{(\psi'')^{2}}{8}\,\frac{\partial^{4}p_{0}}{\partial x^{4}}+\big(\frac{\theta_{0}\,\psi''}{2}-\frac{\psi'''}{6}\big)\,
\frac{\partial^{3}p_{0}}{\partial x^{3}}+\frac{1}{2}\,(\theta_{0}^{2}-\theta'_{0})\,\frac{\partial^{2}p_{0}}{\partial x^{2}}\Big]_{x=-\psi'(\xi)}
\end{aligned}
\end{equation}
where we have defined
\begin{equation}\label{Theta111indzer0}
\theta_{0}=\frac{\alpha'}{2\,\alpha}
\end{equation}
We also used the relations
\begin{equation}\label{GRANDBZERO}
b'_{0}=-\theta_{0}\,b_{0},\quad b''_{0}=\big(\theta_{0}^{2}-\theta'_{0}\big)\,b_{0}
\end{equation}
The homogeneous equation associated with (\ref{T00RANSPORTB636}) is the same as (\ref{EQDIBOOO}). Therefore, we seek a particular solution of (\ref{T00RANSPORTB636}) of the form
\begin{equation}\label{E18}
D_{1}(\xi)\,|\alpha(\xi)|^{-\frac{1}{2}}
\end{equation}
By the method of variation of constants, one finds
\begin{equation}\label{0E019}
D_{1}(\xi)=-i\,C_{0}\,\int_{0}^{\xi}\frac{1}{\alpha}\,
\Big[\frac{(\psi'')^{2}}{8}\,V^{(4)}+\big(\frac{\theta_{0}\,\psi''}{2}-\frac{\psi'''}{6}\big)\,
V'''+\frac{1}{2}\,(\theta_{0}^{2}-\theta'_{0})\,V''\Big]\,d\zeta
\end{equation}
By integration by parts, we obtain
\begin{equation}\label{INTEGPARt0E0ZZ19}
\frac{1}{C_{0}}\,\text{Im}\big(D_{1}(\xi)\big)
=\int_{0}^{\xi}\frac{1}{\alpha}\,
\Big[\frac{(\psi'')^{2}}{24}\,V^{(4)}+\frac{\alpha'\,\psi''}{6\,\alpha}\,
V'''+\frac{(\alpha')^{2}}{8\,\alpha^{2}}\,V''\Big]_{x=-\psi'(\zeta)}d\zeta+
\big[\frac{\psi''}{6\,\alpha}\,V'''+\frac{\alpha'}{4\,\alpha^{2}}\,V''\big]_{0}^{\xi}
\end{equation}
We normalize by imposing
\begin{equation}\label{E20}
D_{1}(0)=0
\end{equation}
so that the general solution of (\ref{T00RANSPORTB636}) takes the form
\begin{equation}\label{E21}
b_{1}(\xi)=\big(C_{1}+D_{1}(\xi)\big)
\,|\alpha(\xi)|^{-\frac{1}{2}}
\end{equation}
It then follows that
\begin{equation}\label{}
b_{0}(\xi)+h\,b_{1}(\xi)=\big(C_{0}+h\,C_{1}+h\,D_{1}(\xi)\big)
\,|\alpha(\xi)|^{-\frac{1}{2}}
\end{equation}
The integration constants $C_0$ and $C_1$ will be determined by normalizing the microlocal Wronskians as follows:
\subsubsection{Normalisation}\label{nOrMaLiSaTiOnn}
We compute the microlocal Wronskian of $(u^{a},\overline{u^{a}})=(u,\overline{u})$ in $\omega^{a}_{\rho}$. Our goal is to normalize the microlocal solution
\begin{equation}\label{MICROLOCALsolutionFOURIER}
\widehat{u}(\xi;h):=e^{\frac{i}{h} \psi(\xi)}\,b(\xi;h)=e^{\frac{i}{h} \psi(\xi)}\,\big(C_{0}+h\,C_{1}+h\,D_{1}(\xi)+O(h^{2})\big)\,|\alpha(\xi)|^{-\frac{1}{2}}
\end{equation}
using the microlocal Wronskian. That is, we seek constants \(C_0\) and \(C_1\) such that
\[\mathcal{W}^{a}\big(\widehat{u},\overline{\widehat{u}}\big)=1+\mathcal{O}(h^{2}).\]
In the Fourier representation, we have
$$\mathcal{W}^{a}_{\rho}\big(\widehat{u},\overline{\widehat{u}}\big)=\big(\frac{i}{h}\,[P,\chi^{a}]_{\rho}\,\widehat{u}|\widehat{u}\big),$$
where $\chi^{a}\in C^{\infty}_{0}(\mathbb{R}^{2})$ is a smooth cut-off equal to 1 near the focal point $a=(x_{E},0)$. Without loss of generality, we can take $\chi^{a}(x,\xi)=\chi_{1}^{a}(x)\,\chi_{2}(\xi)$, with $\chi_{2}\equiv 1$ on small neighborhoods $\omega^{a}_{\pm}$, of $\mathrm{supp}\big(\frac{i}{h}\,[P,\chi^{a}]\big)\cap \{\xi^{2}+V(x)=E\}$ in the region $\pm \xi>0$. Thus it is sufficient to consider variations of $\chi_1^{a}$ only.

In general, if $P$ and $Q$ are two $h$-pseudodifferential operators, whose Weyl symbols are respectively given by
\begin{equation*}
\sigma^{W}(P)(x,\xi;h)=p_{0}(x,\xi)+h\,p_{1}(x,\xi)+h^{2}\,p_{2}(x,\xi)+\cdots
\end{equation*}
\begin{equation*}
\sigma^{W}(Q)(x,\xi;h)=q_{0}(x,\xi)+h\,q_{1}(x,\xi)+h^{2}\,q_{2}(x,\xi)+\cdots
\end{equation*}
then
\begin{equation*}
\sigma^{W}\big(\frac{i}{h}[P,Q]\big)=\{\sigma^{W}(P),\sigma^{W}(Q)\}+\mathcal{O}(h^{2})
\end{equation*}
where \(\{\cdot,\cdot\}\) denotes the Poisson bracket. In particular,
\begin{equation*}
\sigma^{W}\big(\frac{i}{h}\,[P,\chi^{a}]\big)(x,\xi;h)=2\,\xi\,\chi'_{1}(x)+\mathcal{O}(h^{2}):=c(x,\xi;h)+\mathcal{O}(h^{2})
\end{equation*}
Using the Weyl calculus, the operator acts in Fourier representation as
\begin{equation*}
c^{w}(-h\,D_{\xi},\xi;h)\,v(\xi;h)=(2\,\pi\,h)^{-1}\,\int\int e^{-\frac{i}{h}\,(\xi-\eta)\,y}\,c\big(y,\frac{\xi+\eta}{2};h\big)\,v(\eta;h)\,dy\,d\eta
\end{equation*}
Hence, applying this to $\widehat{u}$, we obtain
\begin{equation*}
\frac{i}{h}\,[P,\chi^{a}]\,\widehat{u}(\xi;h)=
(2\,\pi\,h)^{-1}\,\int\int e^{\frac{i}{h}\big(\psi(\eta)-(\xi-\eta)\,y\big)}\,c\big(y,\frac{\xi+\eta}{2};h\big)\,b(\eta;h)\,dy\,d\eta
\end{equation*}
For fixed $\xi$, the phase corresponding to the oscillatory integral defining $\displaystyle \frac{i}{h}\,[P,\chi^{a}]\,\widehat{u}$ is
\begin{equation*}
\varphi_{\xi}(y,\eta)=\psi(\eta)-(\xi-\eta)\,y
\end{equation*}
whose critical points are
\begin{equation*}
\big(y_{c}(\xi),\eta_{c}(\xi)\big)=\big(-\psi'(\xi),\xi\big)
\end{equation*}
and therefore, the critical values of $\varphi_{\xi}$ are
\begin{equation*}
\varphi_{\xi}\big(y_{c}(\xi),\eta_{c}(\xi)\big)=\psi(\xi)
\end{equation*}
A direct computation shows that
\begin{equation*}
\big(\mathrm{Hess}\,\varphi_{\xi}\big)\big(y_{c}(\xi),\eta_{c}(\xi)\big)=
\left(
\begin{array}{cc}
0 & 1 \\
1 & \psi''(\xi) \\
\end{array}
\right)
\end{equation*}
By the stationary phase theorem already stated in (\ref{Phasestat0}), we obtain:
\begin{equation*}
\frac{i}{h}\,[P,\chi^{a}]\,\hat{u}(\xi;h)\sim e^{\frac{i}{h}\,\psi(\xi)}\,\sum_{j}h^{j}\,L_{j}v_{\xi}\big(-\psi'(\xi),\xi;h\big)
\end{equation*}
where we have set
\begin{equation*}
v_{\xi}(y,\eta;h)=c\big(y,\frac{\xi+\eta}{2};h\big)\,b(\eta;h)=(\xi+\eta)\,\big(\chi_{1}^{a}\big)'(y)\,b(\eta;h)
\end{equation*}
In particular,
\begin{equation}\label{LZEROVcritique}
L_{0}v_{\xi}\big(-\psi'(\xi),\xi;h\big)=v_{\xi}\big(-\psi'(\xi),\xi;h\big)=c\big(-\psi'(\xi),\xi;h\big)\,b(\xi;h)
=2\,\xi\,\big(\chi_{1}^{a}\big)'\big(-\psi'(\xi)\big)\,b(\xi;h)
\end{equation}
A direct computation using (\ref{TTL551A06}) shows that
\begin{align*}
L_{1}v_{\xi}\big(-\psi'(\xi),\xi;h\big)&
=\sum_{n=0}^{2}\frac{2^{-(n+1)}}{i\,n!\,(n+1)!}\,\big<\left(
\begin{array}{cc}
-\psi''(\xi) & 1 \\
1 & 0 \\
\end{array}
\right)    \,\left(
               \begin{array}{c}
                 D_{y} \\
                 D_{\eta} \\
               \end{array}
             \right)
, \left(
               \begin{array}{c}
                 D_{y} \\
                 D_{\eta} \\
               \end{array}
             \right)\big>^{n+1}\,(g^{n}_{\xi}\,v_{\xi})\big(-\psi'(\xi),\xi;h\big)\ \\
             &=\sum_{n=0}^{2}\frac{2^{-(n+1)}}{i\,n!\,(n+1)!}\,
\Big(\psi''(\xi)\,\frac{\partial^{2}}{\partial y^{2}}-2\,\frac{\partial^{2}}{\partial y \partial \eta}\Big)^{n+1}\,
(g^{n}_{\xi}\,v_{\xi})\big(-\psi'(\xi),\xi;h\big)
\end{align*}
A straightforward computation shows that
\begin{align*}
g_{\xi}(y,\eta)&=\varphi_{\xi}(y,\eta)-\varphi_{\xi}\big(-\psi'(\xi),\xi\big)-\frac{1}{2}\,\big<\left(
\begin{array}{cc}
 0 & 1 \\
  1 & \psi''(\xi) \\
  \end{array}
  \right)\,\left(
   \begin{array}{c}
      y+\psi'(\xi) \\
       \eta-\xi\\
      \end{array}
       \right)
       ,\left(
             \begin{array}{c}
                     y+\psi'(\xi) \\
                              \eta-\xi\\
                                     \end{array}
                                   \right)
\big>\ \\
&=\psi(\eta)-\psi(\xi)-(\eta-\xi)\,\psi'(\xi)-\frac{1}{2}\,(\eta-\xi)^{2}\,\psi''(\xi)=\mathcal{O}\big((\eta-\xi)^{3}\big)
\end{align*}
Notice that $g_{\xi}(y,\eta)$ does not depend on the variable $y$. After lengthy calculations, one shows that
\begin{equation*}
\Big(\psi''(\xi)\,\frac{\partial^{2}}{\partial y^{2}}-2\,\frac{\partial^{2}}{\partial y \partial \eta}\Big)^{n+1}\,
(g^{n}_{\xi}\,v_{\xi})\big(-\psi'(\xi),\xi;h\big)=0,\quad\forall\,n=1, 2
\end{equation*}
and therefore
\begin{equation*}
L_{1}\,v\big(-\psi'(\xi),\xi;h\big)=\frac{1}{2\,i}\,\Big(\psi''(\xi)\,\frac{\partial^{2}v_{\xi}}{\partial y^{2}}-2\,\frac{\partial^{2}v_{\xi}}{\partial y \partial \eta}\Big)\big(-\psi'(\xi),\xi;h\big)
=i\,b(\xi;h)\,r'(\xi)+2\,i\,r(\xi)\,\partial_{\xi}b(\xi;h)
\end{equation*}
with
\begin{equation}\label{FormuLERxi}
r(\xi)=\xi\,\big(\chi_{1}^{a}\big)''\big(-\psi'(\xi)\big)=\frac{\alpha(\xi)\,\psi''(\xi)}{2}\,\big(\chi_{1}^{a}\big)''\big(-\psi'(\xi)\big)
\end{equation}
It follows that
\begin{equation}\label{}
\frac{i}{h}\,[P,\chi^{a}]\,\widehat{u}(\xi;h)\sim e^{\frac{i}{h}\,\psi(\xi)}\,d(\xi;h)
\end{equation}
where we have set
\begin{equation*}
d(\xi;h)=d_{0}(\xi)+h\,d_{1}(\xi)+\mathcal{O}(h^{2})
\end{equation*}
with
\begin{equation}\label{}
d_{0}(\xi)=2\,\xi\,\big(\chi_{1}^{a}\big)'\big(-\psi'(\xi)\big)\,b_{0}(\xi)
\end{equation}
and
\begin{equation}\label{}
d_{1}(\xi)=i\,r'(\xi)\,b_{0}(\xi)+2\,i\,r(\xi)\,b'_{0}(\xi)+2\,\xi\,\big(\chi_{1}^{a}\big)'\big(-\psi'(\xi)\big)\,b_{1}(\xi)
\end{equation}
As a consequence,
\begin{equation*}
\mathcal{W}^{a}_{+}(\hat{u},\overline{\hat{u}})=
\int_{0}^{+\infty}d_{0}(\xi)\,\overline{b_{0}(\xi)}\,d\xi+h\,
\int_{0}^{+\infty}\big(d_{0}(\xi)\,\overline{b_{1}(\xi)}+
d_{1}(\xi)\,\overline{b_{0}(\xi)}\big)\,d\xi+\mathcal{O}(h^{2})
:=\mathcal{M}_{0}^{+}+h\,\mathcal{M}_{1}^{+}+\mathcal{O}(h^{2})
\end{equation*}
{\bf Computation of $\mathcal{M}_{0}^{+}$:} Using formula (\ref{E6}), we obtain
\begin{align*}
\mathcal{M}_{0}^{+}&=\int_{0}^{+\infty}2\,\xi\,\big(\chi_{1}^{a}\big)'\big(-\psi'(\xi)\big)\,|b_{0}(\xi)|^{2}\,d\xi\ \\
&=|C_{0}|^{2}\,\int_{0}^{+\infty}\frac{\alpha(\xi)}{|\alpha(\xi)|}\,
\psi''(\xi)\,\big(\chi_{1}^{a}\big)'\big(-\psi'(\xi)\big)\,d\xi=|C_{0}|^{2}\,\mathrm{sgn}\big(\alpha(0)\big)
\end{align*}
Here we used the fact that
\begin{equation*}
\lim_{\xi\rightarrow +\infty}\chi_{1}^{a}\big(-\psi'(\xi)\big)=0,\quad \chi_{1}^{a}\big(-\psi'(0)\big)=\chi_{1}^{a}(x_{E})=1
\end{equation*}
{\bf Computation of $\mathcal{M}_{1}^{+}$:} A straightforward computation shows that
\begin{equation}\label{dZEROunBARRe}
d_{0}(\xi)\,\overline{b_{1}(\xi)}=C_{0}\,\mathrm{sgn}(\alpha)\,\psi''\,\big(\chi_{1}^{a}\big)'\,
\big(\overline{C_{1}+D_{1}(\xi)}\big)
\end{equation}
\begin{equation}\label{dunZEROBBARRe}
d_{1}(\xi)\,\overline{b_{0}(\xi)}=
i\,|C_{0}|^{2}\,\frac{r'}{|\alpha|}-i\,r\,\frac{\alpha'}{\alpha}\,\frac{|C_{0}|^{2}}{|\alpha|}-i\,\chi''_{1}\,\frac{|C_{0}|^{2}}{|\alpha|}+
\overline{C_{0}}\,\mathrm{sgn}(\alpha)\,\psi''\,\big(\chi_{1}^{a}\big)'\,
\big(C_{1}+D_{1}(\xi)\big)
\end{equation}
which gives
\begin{equation}\label{dZEROPLUSBBARRe}
d_{0}(\xi)\,\overline{b_{1}(\xi)}+d_{1}(\xi)\,\overline{b_{0}(\xi)}=
2\,\text{Re}\Big[\overline{C_{0}}\,\mathrm{sgn}(\alpha)\,\psi''\,\big(\chi_{1}^{a}\big)'\,
\big(C_{1}+D_{1}(\xi)\big)\Big]+
i\,|C_{0}|^{2}\,\frac{r'}{|\alpha|}-i\,|C_{0}|^{2}\,\frac{r\alpha'}{\alpha|\alpha|}
\end{equation}
Hence,
\begin{equation}\label{Mathcalunplus}
\mathcal{M}_{1}^{+}=2\,\text{Re}(J_{1})+i\,|C_{0}|^{2}\,(J_{2}-J_{3})
\end{equation}
where we have set
\begin{equation}\label{Jindiceun}
J_{1}=\overline{C_{0}}\,\int_{0}^{+\infty}\text{sgn}(\alpha)\,\psi''\,\big(\chi_{1}^{a}\big)'\,
\big(C_{1}+D_{1}(\xi)\big)\,d\xi
\end{equation}
\begin{equation}\label{Jindicedeux}
J_{2}=\int_{0}^{+\infty}\frac{r'}{|\alpha|}\,d\xi,\quad J_{3}=\int_{0}^{+\infty}\frac{r\alpha'}{\alpha|\alpha|}\,d\xi
\end{equation}
We easily obtain
\begin{equation}\label{}
J_{1}=\mathrm{sgn}\big(\alpha(0)\big)\,\overline{C_{0}}\,C_{1}+\overline{C_{0}}\,\int_{0}^{+\infty}\mathrm{sgn}(\alpha)(\xi)\,\chi_{1}^{a}\,D'_{1}(\xi)\,d\xi
\end{equation}
Since $D'_{1}(\xi)$ is purely imaginary, we obtain
\begin{equation}\label{}
\text{Re}(J_{1})=\mathrm{sgn}\big(\alpha(0)\big)\,\text{Re}(\overline{C_{0}}\,C_{1})
\end{equation}
Moreover, by integrating by parts and using formula (\ref{FormuLERxi}), we get
\begin{equation}\label{}
\begin{aligned}
J_{2}-J_{3}&=\int_{0}^{+\infty}\frac{r'}{|\alpha|}\,d\xi-\int_{0}^{+\infty}\frac{r\alpha'}{\alpha|\alpha|}\,d\xi\ \\
           &=\Big[\frac{r(\xi)}{|\alpha(\xi)|}\Big]_{0}^{+\infty}=
\Big[\frac{\mathrm{sgn}(\alpha)\,\psi''(\xi)\,\big(\chi_{1}^{a}\big)''}{2}\Big]_{0}^{+\infty}=0
\end{aligned}
\end{equation}
Therefore,
\begin{equation}\label{Mathcalunplus}
\mathcal{M}_{1}^{+}=2\,\mathrm{sgn}\big(\alpha(0)\big)\,\text{Re}(\overline{C_{0}}\,C_{1}).
\end{equation}
It follows that
\begin{equation}
\mathcal{W}^{a}_{+}\big(\widehat{u},\overline{\widehat{u}}\big)=|C_{0}|^{2}\,\mathrm{sgn}\big(\alpha(0)\big)
+2\,h\,\mathrm{sgn}\big(\alpha(0)\big)\,\text{Re}(\overline{C_{0}}\,C_{1})+\mathcal{O}(h^{2})
\end{equation}
Similarly, one shows that
\begin{equation}
\mathcal{W}^{a}_{-}\big(\widehat{u},\overline{\widehat{u}}\big)=-|C_{0}|^{2}\,\mathrm{sgn}\big(\alpha(0)\big)
-2\,h\,\mathrm{sgn}\big(\alpha(0)\big)\,\text{Re}(\overline{C_{0}}\,C_{1})+\mathcal{O}(h^{2}),
\end{equation}
which allows us to conclude that, modulo $\mathcal{O}(h^{2})$,
\begin{equation}\label{}
\mathcal{W}^{a}\big(\widehat{u},\overline{\widehat{u}}\big):=
\mathcal{W}^{a}_{+}\big(\widehat{u},\overline{\widehat{u}}\big)-\mathcal{W}^{a}_{-}\big(\widehat{u},\overline{\widehat{u}}\big)=
2\,|C_{0}|^{2}\,\mathrm{sgn}\big(\alpha(0)\big)+4\,h\,\mathrm{sgn}\big(\alpha(0)\big)\,\text{Re}(\overline{C_{0}}\,C_{1})+
\mathcal{O}(h^{2})
\end{equation}
In the case where $\alpha(0)=V'(x_{E})>0$, with $C_{0}>0$ and $C_{1}\in \mathbb{R}$, we necessarily obtain
\begin{equation}\label{C00}
C_{0}=2^{-1/2},\quad C_{1}=0
\end{equation}
For these values of $C_{0}$ and $C_{1}$, one checks that
\begin{equation}\label{}
\mathcal{W}^{a}\big(\widehat{u},\overline{\widehat{u}}\big)=1+\mathcal{O}(h^{2})
\end{equation}
\begin{rem}
So far, we have obtained, still with $\alpha(0)>0$:
\begin{equation}\label{}
\begin{aligned}
\widehat{u}^{a}(\xi;h)&=e^{\frac{i}{h}\,\psi(\xi)}\,\Big(2^{-1/2}+h\,D_{1}(\xi)+\mathcal{O}(h^{2})\Big)\,|\alpha(\xi)|^{-\frac{1}{2}}\ \\
                      &=e^{\frac{i}{h}\,\psi(\xi)}\,2^{-1/2}\,\Big(1+i\,h\,\mathrm{Im}\big(D_{1}(\xi)\big)+\mathcal{O}(h^{2})\Big)\,|\alpha(\xi)|^{-\frac{1}{2}}
\end{aligned}
\end{equation}
where $\mathrm{Im}\big(D_{1}(\xi)\big)$ is given by (\ref{INTEGPARt0E0ZZ19}). Using the approximation $(h\rightarrow 0)$
\begin{equation}\label{}
1+i\,h\,\mathrm{Im}\big(D_{1}(\xi)\big)+\mathcal{O}(h^{2})=\exp\Big[i\,h\,\mathrm{Im}\big(D_{1}(\xi)\big)\Big]\,\big(1+\mathcal{O}(h^{2})\big),
\end{equation}
the normalized WKB solution near $a_{E}=(x_{E},0)$ now writes
\begin{equation}\label{UCHAPEAU}
\widehat{u}^{a}(\xi;h)=2^{-1/2}\,|\alpha(\xi)|^{-\frac{1}{2}}\,e^{i\widetilde{S}(\xi;h)/h}\,\big(1+\mathcal{O}(h^{2})\big)
\end{equation}
with the $h$-dependent phase function
\begin{equation}\label{}
\widetilde{S}(\xi;h)=\psi(\xi)+h^{2}\,\mathrm{Im}\big(D_{1}(\xi)\big)
\end{equation}
\end{rem}
\subsubsection{The homology class of the generalized action: Fourier representation}
Here we identify the various term in (\ref{UCHAPEAU}), which are responsible for the holonomy of $\widehat{u}^{a}$. First on $\gamma_{E}$ we have
$\psi(\xi)=\displaystyle\int-x\,d\xi+\mathrm{Const.}$ and $\varphi(x)=\displaystyle\int \xi\,dx+\mathrm{Const}$. By Hamilton equations
$$\dot{\xi}(t)=-\partial_{x} p_{0}\big(x(t),\xi(t)\big)=-V'\big(x(t)\big),\quad \dot{x}(t)=\partial_{\xi} p_{0}\big(x(t),\xi(t)\big)=2\,\xi(t)$$
where $t$ is the parametrization of $\gamma_{E}$ by the time evolution. Next we consider $D_{1}(\xi)$ as the integral over $\gamma_{E}$ of the 1-form $\Omega_{1}(\xi)$, defined near $\xi_{E}=0$ in Fourier representation by the expression (\ref{0E019}). Using WKB construction, $\Omega_{1}(\xi)$ can also be extended to the spatial representation. Since $\gamma_{E}$ is Lagrangian, $\Omega_{1}(\xi)$ is a closed form that we are going to compute modulo exact forms. Using integration by parts in (\ref{0E019}), together with the condition $D_{1}(0)=0$, we derive the following relations
\begin{equation}\label{}
\sqrt{2}\,\mathrm{Re}\big(D_{1}(\xi)\big)=0
\end{equation}
and
\begin{equation}\label{Racin2ImDind1}
\sqrt{2}\,\mathrm{Im}\big(D_{1}(\xi)\big)=\int_{0}^{\xi}T_{1}(\zeta)\,d\zeta+
\big[\frac{\psi''}{6\,\alpha}\,V'''+\frac{\alpha'}{4\,\alpha^{2}}\,V''\big]_{0}^{\xi}
\end{equation}
with
\begin{equation}\label{TINdice1}
T_{1}(\zeta)=\frac{1}{\alpha}\,
\Big[\frac{(\psi'')^{2}}{24}\,V^{(4)}+\frac{\alpha'\,\psi''}{6\,\alpha}\,
V'''+\frac{(\alpha')^{2}}{8\,\alpha^{2}}\,V''\Big]_{x=-\psi'(\zeta)}
\end{equation}
There follows:
\begin{lem}\label{LemmMa10101}
Modulo the integral of an exact form in $\mathcal{A}$, with $T_{1}$ as in (\ref{TINdice1}) we have:
\begin{equation}\label{}
\mathrm{Re}\big(D_{1}(\xi)\big)\equiv0,\quad \sqrt{2}\,\mathrm{Im}\big(D_{1}(\xi)\big)\equiv\int_{0}^{\xi}T_{1}(\zeta)\,d\zeta
\end{equation}
\end{lem}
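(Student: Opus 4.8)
The plan is to recognize Lemma~\ref{LemmMa10101} as a reformulation, in the language of de Rham cohomology along $\gamma_E$ (equivalently, on the ring $\mathcal{A}$), of the two identities recorded immediately before the statement, namely $\sqrt{2}\,\mathrm{Re}\big(D_1(\xi)\big)=0$ and (\ref{Racin2ImDind1}); the only genuinely new point is to check that the boundary contribution appearing in (\ref{Racin2ImDind1}) is the integral of an \emph{exact} $1$-form. For the real part this is immediate: specialised to the normalising value $C_0=2^{-1/2}$ fixed in the preceding subsection, (\ref{0E019}) reads $\sqrt{2}\,D_1(\xi)=-i\int_0^\xi\omega_1$, where $\omega_1$ is the $1$-form whose density on $\gamma_E$ is the integrand of (\ref{0E019}). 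Near the focal point $\xi_E=0$ one has $\alpha(0)=V'(x_E)>0$, so $\alpha$, $\psi$ and all their derivatives are real there, and hence so is $\theta_0$ by (\ref{Theta111indzer0}); thus $\omega_1$ is a real $1$-form, $\sqrt{2}\,D_1(\xi)\in i\mathbb{R}$, and therefore $\mathrm{Re}\big(D_1(\xi)\big)=0$ identically, which is \emph{a fortiori} $\equiv 0$ modulo exact forms.

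For the imaginary part I would invoke (\ref{Racin2ImDind1}), which is (\ref{INTEGPARt0E0ZZ19}) evaluated at $C_0=2^{-1/2}$: it expresses $\sqrt{2}\,\mathrm{Im}\big(D_1(\xi)\big)$ as $\int_0^\xi T_1(\zeta)\,d\zeta$, with $T_1$ as in (\ref{TINdice1}), plus the boundary term $\big[g\big]_0^\xi$, where $g(\xi):=\big[\frac{\psi''}{6\,\alpha}\,V'''+\frac{\alpha'}{4\,\alpha^{2}}\,V''\big]_{x=-\psi'(\xi)}$. Now $\big[g\big]_0^\xi=g(\xi)-g(0)=\int_0^\xi g'(\zeta)\,d\zeta=\int_0^\xi dg$ is the integral along $\gamma_E$ of the exact $1$-form $dg$; and since $g$ is assembled only from $\psi$, $\alpha$ and $V$, which by the WKB construction recalled before the statement all extend smoothly across the focal points, $g$ is a well-defined smooth function on $\mathcal{A}$, so $dg$ is an exact $1$-form on $\mathcal{A}$ in the sense meant by the lemma. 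Subtracting it leaves $\sqrt{2}\,\mathrm{Im}\big(D_1(\xi)\big)\equiv\int_0^\xi T_1(\zeta)\,d\zeta$ modulo exact forms, which together with the real-part statement is precisely the assertion of Lemma~\ref{LemmMa10101}.

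The step deserving the most care --- and where I would concentrate the write-up --- is the justification that $g$, equivalently the discrepancy between the chart expression of the closed $1$-form $\Omega_1$ whose primitive is $D_1$ and the form with density $T_1$, is genuinely the restriction to the Fourier chart of a globally defined smooth object on the ring $\mathcal{A}$: here one uses that $\Omega_1$ extends from the Fourier to the spatial representation through the stationary-phase/WKB matching of the previous subsection, and that $\gamma_E$ being Lagrangian forces $\Omega_1$ to be closed, so that ``modulo an exact form in $\mathcal{A}$'' is a meaningful equivalence and the period $\oint_{\gamma_E}dg=0$. Everything else is elementary --- a real $1$-form has vanishing real part, and a total variation $\big[g\big]_0^\xi$ contributes nothing to a period --- and the analytic identities required are exactly those already displayed in (\ref{0E019})--(\ref{Racin2ImDind1}).
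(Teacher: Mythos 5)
Your proof is correct and reproduces the paper's tacit argument: the Lemma is nothing more than the two identities displayed just before it --- $\sqrt{2}\,\mathrm{Re}\big(D_1\big)=0$ and (\ref{Racin2ImDind1}), themselves obtained from (\ref{0E019}) by the integration by parts recorded in (\ref{INTEGPARt0E0ZZ19}) with $C_0=2^{-1/2}$ --- supplemented by the observation that the bracket $\big[\frac{\psi''}{6\alpha}V'''+\frac{\alpha'}{4\alpha^2}V''\big]_0^\xi$ is the total variation of a smooth function near the focal point and hence the integral of an exact $1$-form. The paper offers no further argument beyond ``There follows,'' so your reconstruction is on target. Your final paragraph, urging care about whether $g$ defines a global smooth function on $\mathcal{A}$, raises a reasonable concern but goes beyond what the Lemma asserts: as used in the paper the Lemma is a purely local statement in the Fourier chart near $a_E$ (where $\alpha(\xi)=V'(-\psi'(\xi))$ stays bounded away from zero), and the passage to the global period $\oint_{\gamma_E}\Omega_1$ is handled afterwards via Stokes' formula (\ref{FORMULASTOKESS}) rather than by extending $g$ itself across $\gamma_E$.
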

If $u(x,\xi)$, $v(x,\xi)$ are any smooth functions on $\mathcal{A}$ we set $\Omega(x,\xi)=u(x,\xi)\,dx+v(x,\xi)\,d\xi$. By Stokes' formula
\begin{equation*}
\int_{\gamma_{E}}\Omega(x,\xi)=\int\int_{\{\xi^{2}+V(x)\leq E\}}\big(\partial_{x} v-\partial_{\xi} u\big)\,dx \wedge{d\xi}
\end{equation*}
Making the symplectic change of coordinates $(x,\xi)\mapsto (t,E)$ in $T^{*}\mathbb{R}$
\begin{equation}\label{St2}
\int\int_{\{p_{0}\leq E\}}\big(\partial_{x} v-\partial_{\xi} u\big)\,dx \wedge{d\xi}=\int_{0}^{E}\int_{0}^{T(E')}\big(\partial_{x} v-\partial_{\xi} u\big)\,dt \wedge{dE'}
\end{equation}
where $T(E')$ is the period of the flow of Hamilton vector field $H_{p_{0}}$ at energy $E'$ \big($T(E')$ being a constant near (0,0)\big). Taking derivative with respect to $E$, we find:
\begin{equation}\label{FORMULASTOKESS}
\frac{d}{dE}\,\int_{\gamma_{E}}\Omega(x,\xi)=\int_{0}^{T(E)}\big(\partial_{x} v-\partial_{\xi} u\big)\,dt
\end{equation}
We want to simplify the expression (\ref{INTEGPARt0E0ZZ19}). For this purpose, we set
\begin{equation}\label{}
g_{0}(x,\xi)=\frac{V''(x)}{V'(x)}\,\big(\psi'''(\xi)\,\alpha(\xi)+\psi''(\xi)\,\alpha'(\xi)\big)=2\,\frac{V''(x)}{V'(x)}
\end{equation}
Taking second derivative of eikonal equation (\ref{Jacob}), we get
\begin{equation*}
\frac{(\partial_{x} g_{0})\big(-\psi'(\zeta),\zeta\big)}{\alpha(\zeta)}
=\frac{\psi'''\,V'''}{\alpha}+\frac{\psi''\,\alpha'\,V'''}{\alpha^{2}}-\frac{\psi'''\,(V'')^{2}}{\alpha^{2}}+
\frac{(\alpha')^{2}\,V''}{\alpha^{3}}
\end{equation*}
By integration by parts, we obtain
\begin{equation*}
\int_{0}^{\xi}\frac{(\partial_{x} g_{0})\big(-\psi'(\zeta),\zeta\big)}{\alpha(\zeta)}\,d\zeta=
\int_{0}^{\xi}\frac{(\psi'')^{2}\,V^{(4)}}{\alpha}\,d\zeta+
3\,\int_{0}^{\xi}\frac{(\alpha')^{2}\,V''}{\alpha^{3}}\,d\zeta+4\,
\int_{0}^{\xi}\frac{\psi''\,\alpha'\,V'''}{\alpha^{2}}\,d\zeta+
\big[\frac{\psi''\,V'''}{\alpha}+
\frac{\alpha'\,V''}{\alpha^{2}}\big]_{0}^{\xi}
\end{equation*}
so
\begin{equation}\label{}
\frac{24}{C_{0}}\,\text{Im}\big(D_{1}(\xi)\big)=\int_{0}^{\xi}\frac{(\partial_{x} g_{0})\big(-\psi'(\zeta),\zeta\big)}{\alpha(\zeta)}\,d\zeta
+\big[\frac{3\,\psi''\,V'''}{\alpha}+
\frac{5\,\alpha'\,V''}{\alpha^{2}}\big]_{0}^{\xi}
\end{equation}
Hence, modulo exact forms, we have
\begin{align*}
\frac{1}{C_{0}}\,\text{Im}\big(D_{1}(\xi)\big)&\equiv\frac{1}{24}\,\int_{0}^{\xi}\frac{(\partial_{x}g_{0})\big(-\psi'(\zeta),\zeta\big)}{\alpha(\zeta)}\,d\zeta
=-\frac{1}{24}\,\int_{0}^{T(E)}\partial_{x} g_{0}\big(x(t),\xi(t)\big)\,dt\ \\
     &=-\frac{1}{24}\,\frac{d}{dE}\,\int_{\gamma_{E}}g_{0}(x,\xi)\,d\xi
     =-\frac{1}{12}\,\frac{d}{dE}\,\int_{\gamma_{E}}\frac{V''(x)}{\partial_{x} p_{0}(x,\xi)}\,\,d\xi\ \\
     &=\frac{1}{12}\,\frac{d}{dE}\,\int_{0}^{T(E)}V''\big(x(t)\big)\,dt
\end{align*}
Writing
$$\sqrt{2}\,D_{1}(\xi)=\int_{0}^{\xi}\Omega_{1}(\zeta)$$
we find that
\begin{equation}
\mathrm{Im}\oint_{\gamma_{E}}\Omega_{1}=
\frac{1}{12}\,\frac{d}{dE}\,\int_{0}^{T(E)}V''\big(x(t)\big)\,dt
\end{equation}
Since
\begin{equation}
\text{Re}\big(D_{1}(\xi)\big)=0
\end{equation}
we deduce that
\begin{equation}\label{}
\mathrm{Re}\oint_{\gamma_{E}}\Omega_{1}=0
\end{equation}
\subsubsection{Asymptotic solutions in the spatial representation}
In order to construct the microlocal solution of the stationary Schr\"{o}dinger equation
\begin{equation}\label{bggHkw1}
\big((hD_{x})^{2}+V(x)\big)u(x)=E\,u(x)
\end{equation}
in the spatial representation up to second order in $h$, it suffices to apply the inverse semi-classical Fourier transform to the solution $\widehat{u}^{a}(\xi;h)$:
\begin{equation}\label{FourieRInveRseSemi}
u^{a}(x;h)=(2\,\pi\,h)^{-\frac{1}{2}}\,\int e^{\frac{i}{h}\,x\,\xi}\,\widehat{u}^{a}(\xi;h)\,d\xi=(2\,\pi\,h)^{-\frac{1}{2}}\,\int e^{\frac{i}{h}\,(x\,\xi+\psi(\xi))}\,b(\xi;h)\,d\xi
\end{equation}
In general, the integral in (\ref{FourieRInveRseSemi}) cannot be computed explicitly. However, since the WKB wave function is itself an approximation, one may obtain an asymptotic evaluation. This is achieved using the method of stationary phase.

For fixed $x$, the phase function associated with the oscillatory integral (\ref{FourieRInveRseSemi}) defining $u^{a}(x;h)$ is
\begin{equation}\label{FoNcTiOnPhAsE}
g_{x}(\xi)=x\,\xi+\psi(\xi).
\end{equation}
It admits two critical points, $\xi_{+}(x)>0$ and $\xi_{-}(x)<0$, which are the roots of
\begin{equation}\label{EqUatiOPOICRItiQQ}
x+\psi'\big(\xi_{\pm}(x)\big)=0.
\end{equation}
The corresponding critical values of $g_{x}$ are
\begin{equation}\label{VALEURtiOPOICRItiQQ}
\varphi_{\pm}(x):=g_{x}\big(\xi_{\pm}(x)\big)=x\,\xi_{\pm}(x)+\psi\big(\xi_{\pm}(x)\big).
\end{equation}
Differentiating (\ref{VALEURtiOPOICRItiQQ}) with respect to $x$ yields
\begin{equation}\label{}
\varphi'_{\pm}(x)=\xi_{\pm}(x)+\Big[x+\psi'\big(\xi_{\pm}(x)\big)\Big]\,\xi_{\pm}(x)=\xi_{\pm}(x)
\end{equation}
Since $\xi_{\pm}(x_{E})=0$ and $\psi(0)=0$, it follows that $\varphi_{\pm}(x_{E})=0$, and therefore
\begin{equation}\label{}
\varphi_{\pm}(x)=\int_{x_{E}}^{x}\xi_{\pm}(y)\,dy:=\varphi_{\pm}(x_{E}, x).
\end{equation}
As the phase function has two critical points contained in the support of $|\alpha(\xi)|^{-1/2}$, one must sum the contributions from both. By the stationary phase theorem (\ref{Phasestat0}), we obtain
\begin{equation}\label{}
\int e^{\frac{i}{h}\big(x\,\xi+\psi(\xi)\big)}\,b(\xi;h)\,d\xi\sim \sum_{\pm} e^{\frac{i}{h}\,\varphi_{\pm}(x_{E}, x)}\,
\bigg(\frac{\Big|\psi''\big(\xi_{\pm}(x)\big)\Big|}{2\,\pi\,h}\bigg)^{-1/2}\,e^{\pm i\,\frac{\pi}{4}}\,\sum_{j}h^{j}L_{j}b\big(\xi_{\pm}(x);h\big)
\end{equation}
Using (\ref{E6}), this becomes
\begin{equation}\label{}
u^{a}(x;h)\sim 2^{-1/2}\,\sum_{\pm} e^{\frac{i}{h}\,\varphi_{\pm}(x_{E}, x)}\,
\big|\xi_{\pm}(x)\big|^{-1/2}\,\Big|\alpha\big(\xi_{\pm}(x)\big)\Big|^{1/2}\,e^{\pm i\,\frac{\pi}{4}}\,\sum_{j}h^{j}L_{j}b\big(\xi_{\pm}(x);h\big)
\end{equation}
A direct computation using (\ref{TTL551A06}) shows that
$$L_{1}b\big(\xi_{\pm}(x);h\big)=\sum_{n=0}^{2}\frac{2^{-(n+1)}}{i\,n!\,(n+1)!}\,
\Big[-\big(\psi''(\xi_{\pm}(x))\big)^{-1}\,\frac{\partial^{2}}{\partial \xi^{2}}\Big]^{n+1}(\Phi^{n}_{\pm, x}\,b)\big(\xi_{\pm}(x);h\big)$$
with
\begin{equation}\label{}
\begin{aligned}
\Phi_{\pm, x}(\xi)&=g_{x}(\xi)-g_{x}\big(\xi_{\pm}(x)\big)
-\frac{1}{2}\,\big(\xi-\xi_{\pm}(x)\big)^{2}\,g''_{x}\big(\xi_{\pm}(x)\big)\ \\
&=x\,\big(\xi-\xi_{\pm}(x)\big)+\psi(\xi)-\psi\big(\xi_{\pm}(x)\big)
-\frac{1}{2}\,\big(\xi-\xi_{\pm}(x)\big)^{2}\,\psi''\big(\xi_{\pm}(x)\big)=\mathcal{O}\Big(\big(\xi-\xi_{\pm}(x)\big)^{3}\Big)
\end{aligned}
\end{equation}
After further computation, one finds
$$L_{1}b=\frac{i}{2}\,(\psi'')^{-1}\,\frac{\partial^{2}b}{\partial \xi^{2}}-\frac{i}{8}\,
(\psi'')^{-2}\,\big(4\,\psi'''\,\partial_{\xi} b+\psi^{(4)}\,b\big)+\frac{5\,i}{24}\,(\psi'')^{-3}\,(\psi''')^{2}\,b$$
and therefore
\begin{equation}\label{}
\begin{aligned}
b+h\,L_{1}b&=b_{0}\,
\big[1+h\,\frac{b_{1}}{b_{0}}+
\frac{i\,h}{2}\,(\psi'')^{-1}\,\frac{b''_{0}}{b_{0}}-\frac{i\,h}{8}\,
(\psi'')^{-2}\,\big(4\,\psi'''\,\frac{b'_{0}}{b_{0}}+\psi^{(4)}\big)+\frac{5\,i\,h}{24}\,(\psi'')^{-3}\,(\psi''')^{2}\big]+\mathcal{O}(h^{2})\ \\
&=b_{0}\,\big[1+i\,h\,\tilde{D}_{1}+\frac{i\,h}{2}\,(\psi'')^{-1}\,(\theta_{0}^{2}-\theta'_{0})-
\frac{i\,h}{8}\,
(\psi'')^{-2}\,\big(\psi^{(4)}-4\,\theta_{0}\,\psi'''\big)+\frac{5\,i\,h}{24}\,(\psi'')^{-3}\,(\psi''')^{2}\big]+\mathcal{O}(h^{2})
\end{aligned}
\end{equation}
In a neighborhood of the focal point $a=(x_{E},0)$ and for $x<x_{E}$, the microlocal solution of the stationary Schr\"{o}dinger equation (\ref{bggHkw1}) is given by
\begin{equation}\label{uaplusplusOINS}
u^{a}(x;h)=\sum_{\pm}u^{a}_{\pm}(x;h)
=\frac{1}{2}\,\sum_{\pm}\,e^{\frac{i}{h}\,\varphi_{\pm}(a, x)}\,\big|\xi_{\pm}(x)\big|^{-\frac{1}{2}}\,e^{\pm i\,\frac{\pi}{4}}\,
\Big[1+i\,h\,Z\big(\xi_{\pm}(x)\big)+\mathcal{O}(h^{2})\Big]
\end{equation}
where
\begin{equation}\label{ZPuaplusplusOINS}
Z\big(\xi_{\pm}(x)\big)=\mathrm{Im}\Big(D_{1}\big(\xi_{\pm}(x)\big)\Big)+R\big(\xi_{\pm}(x)\big),
\end{equation}
with
\begin{equation}\label{RPuaplusplusOINS}
R(\xi)=\frac{1}{2}\,\big(\psi''(\xi)\big)^{-1}\,\Big[\theta_{0}^{2}(\xi)-\theta'_{0}(\xi)\Big]-\frac{1}{8}\,
\big(\psi''(\xi)\big)^{-2}\,\big[\psi^{(4)}(\xi)-4\,\theta_{0}(\xi)\,\psi'''(\xi)\big]+\frac{5}{24}\,\big(\psi''(\xi)\big)^{-3}\,\big(\psi'''(\xi)\big)^{2},
\end{equation}
The phase function is
\begin{equation}\label{RPuaplhusplusOINS}
\varphi_{\pm}(a,x)=\int_{a}^{x}\xi_{\pm}(y)\,dy
\end{equation}
First, we observe that $R\big(\xi_{\pm}(x)\big)$ does not contribute to the homology class of the semi-classical forms defining the action, as it contains no integral term. Thus, the phase in (\ref{uaplusplusOINS}) can be replaced, modulo $\mathcal{O}(h^{3})$ by
\begin{equation}\label{Phasesph5atiale}
S_{\pm}(a,x;h)=\varphi_{\pm}(a,x)+h^{2}\,\mathrm{Im}\Big(D_{1}\big(\xi_{\pm}(x)\big)\Big)
=\int_{a}^{x}\xi_{\pm}(y)\,dy+h^{2}\,\int_{a}^{x}T_{1}\big(\xi_{\pm}(y)\big)\,\xi'_{\pm}(y)\,dy
\end{equation}
where $T_{1}$ is given by (\ref{TINdice1}).
\paragraph{}
Away from $x_{E}$, we use standard WKB theory extending (\ref{uaplusplusOINS}), with the Ansatz
\begin{equation}\label{ANSATZ}
u^{a}_{\rho}(x;h)=\widetilde{a}_{\rho}(x;h)\,e^{\frac{i}{h}\,\varphi_{\rho}(x)}
\end{equation}
where $\widetilde{a}_{\rho}(x;h)$ is a formal series in $h$. For instance, the most suitable choice here is
\begin{equation*}
\widetilde{a}_{\rho}(x;h)=\widetilde{a}_{\rho,0}(x)+h\,\widetilde{a}_{\rho,1}(x)+h^{2}\,\widetilde{a}_{\rho,2}(x)+\cdots
\end{equation*}
The phase $\varphi_{\rho}(x)$ is real, smooth, and must satisfy the eikonal equation
\begin{equation}\label{eico0}
p_{0}\big(x,\varphi'_{\rho}(x)\big)=E\Leftrightarrow \big(\varphi'_{\rho}(x)\big)^{2}+V(x)=E
\end{equation}
Applying the operator $P(x,hD_{x})-E$ to the Ansatz (\ref{ANSATZ}), one finds
\begin{equation}\label{Haper3}
e^{-\frac{i}{h}\,\varphi_{\rho}}\,\big(P(x,hD_{x})-E\big)\,\big(\widetilde{a}_{\rho}\,e^{\frac{i}{h}\,\varphi_{\rho}}\big)=(p_{0}-E)\,\widetilde{a}_{\rho}
-i\,h\,\big(\varphi''_{\rho}\,\widetilde{a}_{\rho}+2\,\varphi'_{\rho}\,\partial_{x}\widetilde{a}_{\rho}\big)-h^{2}\,\frac{\partial^{2} \widetilde{a}_{\rho}}{\partial x^{2}}
\end{equation}
Omitting indices $\rho=\pm$ and $a$, we construct formal solutions (i.e. in the sense of classical formal symbols) of
$$\big(P(x,hD_{x})-E\big)\big(\widetilde{a}(x;h)\,e^{\frac{i}{h}\,\varphi(x)}\big)=0$$
and provided that the eikonal equation (\ref{eico0}) holds, one obtains the first transport equation
\begin{equation}\label{A0zero}
2\,\varphi'(x)\,\widetilde{a}'_{0}(x)+\varphi''(x)\,\widetilde{a}_{0}(x)=0
\end{equation}
whose solutions are given by
\begin{equation*}
\widetilde{a}_{0}(x)=\widetilde{C}_{0}\,|\varphi'(x)|^{-\frac{1}{2}}
\end{equation*}
By annihilating the $h^{2}$-term in (\ref{Haper3}), one shows that the subprincipal term $\widetilde{a}_{1}(x)$ satisfies the differential equation
\begin{equation}\label{A1un}
2\,\varphi'(x)\,\widetilde{a}'_{1}(x)+\varphi''(x)\,\widetilde{a}_{1}(x)=i\,\widetilde{a}''_{0}(x)
\end{equation}
The homogeneous equation associated with (\ref{A1un}) is the same as that for $\widetilde{a}_{0}$ (\ref{A0zero}); hence we look for a particular solution of (\ref{A1un}) of the form
$$\widetilde{D}_{1}(x)\,|\varphi'(x)|^{-1/2}$$
By the method of variation of constants, one obtains
\begin{equation}\label{DTILDEone}
\widetilde{D}_{1}(x)=\frac{i}{2}\,\int^{x}\mathrm{sgn}\big(\varphi'(y)\big)\,|\varphi'(y)|^{-\frac{1}{2}}\,\widetilde{a}''_{0}(y)\,dy
=-i\,\frac{\widetilde{C}_{0}}{4}\,\int^{x}\frac{\{\varphi,y\}}{\varphi'(y)}\,dy
\end{equation}
where $\{\varphi,y\}$ denotes the Schwarzian derivative of the phase function $\varphi$ with respect to the real variable $y$, defined as
\begin{equation}\label{DriveSchwarZienne}
\{\varphi,y\}=\big(\frac{\varphi''}{\varphi'}\big)^{'}(y)-\frac{1}{2}\,\big(\frac{\varphi''(y)}{\varphi'(y)}\big)^{2}
\end{equation}
The function $\widetilde{D}_{1}$ may be normalized by $\widetilde{D}_{1}(x_{E})=0$, thus, the general solution of (\ref{A1un}) is
\begin{equation}\label{bkw10}
\widetilde{a}_{1}(x)=\Big(\widetilde{C}_{1}-i\,\frac{\widetilde{C}_{0}}{4}\,
\int^{x}_{x_{E}}\frac{\{\varphi,y\}}{\varphi'(y)}\,dy\Big)\,\big|\varphi'(x)\big|^{-1/2}
\end{equation}
The function $\widetilde{a}_{1}$ is determined by its value at $x_{E}$. Consequently,
$$a(x;h)=\Big(\widetilde{C}_{0}+h\,\widetilde{C}_{1}+h\,\widetilde{D}_{1}(x)+\mathcal{O}(h^{2})\Big)\,|\varphi'(x)|^{-\frac{1}{2}}.$$
Repeating this construction for the other branch $(\rho=-1)$ yields the two branches of WKB solutions:
\begin{equation}\label{2.10}
u^{a}_{\pm}(x;h)=\Big(\widetilde{C}_{0}+h\,\widetilde{C}_{1}+h\,\widetilde{D}_{1}^{\pm}(x)+\mathcal{O}(h^{2})\Big)\,
|\varphi'_{\pm}(x)|^{-\frac{1}{2}}\,e^{\frac{i}{h}\varphi_{\pm}(x)},
\end{equation}
where
\begin{equation}\label{2.11}
\widetilde{D}_{1}^{\pm}(x)=-i\,\frac{\widetilde{C}_{0}}{4}\,\int^{x}_{x_{E}}\frac{\{\varphi_{\pm},y\}}{\varphi'_{\pm}(y)}\,dy
\end{equation}
Here we have used that $\varphi_{\pm}(x)=\displaystyle\int_{x_{E}}^{x}\xi_{\pm}(y)\,dy$, with $p_0\big(x,\xi_{\pm}(x)\big)=E$.

Normalization with respect to the "flux norm" consists as above in computing $F^{a}_{\pm}=\displaystyle\frac{i}{h}\,[P,\chi^{a}]_{\pm}u^{a}$ by stationary phase theorem (\ref{Phasestat0}) modulo $\mathcal{O}(h^{2})$. Assuming already $\widetilde{C}_{0}$, $\widetilde{C}_{1}$ to be real, a simple calculation using integration by parts yields $\widetilde{C}_{0}=C_{0}=2^{-1/2}$, and $\widetilde{C}_{1}=C_{1}=0$. As a result, outside any neighborhood of $x_{E}$, we have
\begin{equation}\label{2.12}
u_{\pm}(x;h)=
2^{-1/2}\,\big(1+h\,\sqrt{2}\,\widetilde{D}^{\pm}_{1}(x)+\mathcal{O}(h^{2})\big)\,|\varphi'_{\pm}(x)|^{-\frac{1}{2}}\,e^{\frac{i}{h}\varphi_{\pm}(x)}.
\end{equation}
From (\ref{2.12}) we can recover the homology class of generalized action, considering the superposition $u(x;h)=e^{i\pi/4}\,u_{+}(x;h)+e^{-i\pi/4}\,u_{-}(x;h)$ near $a$. The argument is then similar to that of \cite{AIfaRrou}, formula (1).
\subsection{Quantization Rules and Fredholm Property}
Near the focal point $a=(x_{E},0)$, the microlocal solution of $(P-E)u=0$ takes the form
\begin{equation}\label{QRFP1}
u^{a}(x;h)=\sum_{\pm}u^{a}_{\pm}(x;h)=\frac{1}{2}\,\sum_{\pm}\,e^{\pm i\,\frac{\pi}{4}}\,e^{\frac{i}{h}\,S_{\pm}(a,x;h)}\,\big|\xi_{\pm}(x)\big|^{-\frac{1}{2}}\,
\big(1+\mathcal{O}(h^{2})\big)
\end{equation}
where $S_{\pm}(a,x;h)$ is defined in (\ref{Phasesph5atiale}). Similarly, near $a'=(x'_{E},0)$ one has
\begin{equation}\label{QRFP2}
u^{a'}(x;h)=\sum_{\pm}u^{a'}_{\pm}(x;h)=\frac{1}{2}\,\sum_{\pm}\,e^{\mp i\,\frac{\pi}{4}}\,e^{\frac{i}{h}\,S_{\pm}(a',x;h)}\,\big|\xi_{\pm}(x)\big|^{-\frac{1}{2}}\,
\big(1+\mathcal{O}(h^{2})\big)
\end{equation}
The change of phase factor $e^{\pm i\pi/4}$ accounts for \emph{Maslov index}. For the sake of simplicity, we omit henceforth $\mathcal{O}(h^{2})$ terms, but the computations above extend to all order in $h$, thus giving the asymptotics of BS.

The semiclassical distributions $u^{a}, u^{a'}$ span the microlocal kernel $\mathcal{K}_{h}(E)$ of $P-E$ in $(x,\xi)\in ]a',a[\times\mathbb{R}$, and are normalized using microlocal Wronskians as follows.

Let $\chi^a\in C_{0}^{\infty}(\mathbb{R}^{2})$ be a cut-off function as defined in Subsection \ref{nOrMaLiSaTiOnn}. We have
\begin{equation}\label{QRFP3}
\frac{i}{h}\,[P,\chi^a]=-i\,h\,\chi_{2}(\xi)\,\big(\chi_{1}^{a}\big)''(x)+2\,\chi_{2}(\xi)\,\big(\chi_{1}^{a}\big)'(x)\,hD_{x}
\end{equation}
Using (\ref{QRFP1})-(\ref{QRFP2}), modulo $\mathcal{O}(h^{2})$ we obtain
\begin{equation*}
\frac{i}{h}[P,\chi^{a}]u^{a}(x;h)=\sum_{\pm}\pm \big(\chi_{1}^{a}\big)'(x)\,|\xi_{\pm}(x)|^{1/2}\,e^{\frac{i}{h}\,S_{\pm}(a,x;h)}\,e^{\pm i\,\frac{\pi}{4}}-\frac{i\,h}{2}\,\sum_{\pm}
\frac{d}{dx}\Big[\big(\chi_{1}^{a}\big)'(x)\,|\xi_{\pm}(x)|^{-1}\Big]\,|\xi_{\pm}(x)|^{1/2}\,e^{\frac{i}{h}\,S_{\pm}(a,x;h)}\,e^{\pm i\,\frac{\pi}{4}}
\end{equation*}
\begin{equation*}
\frac{i}{h}[P,\chi^{a'}]u^{a'}(x;h)=\sum_{\pm}\pm \big(\chi_{1}^{a'}\big)'(x)\,|\xi_{\pm}(x)|^{1/2}\,e^{\frac{i}{h}\,S_{\pm}(a',x;h)}\,e^{\mp i\,\frac{\pi}{4}}-\frac{i\,h}{2}\,\sum_{\pm}
\frac{d}{dx}\Big[\big(\chi_{1}^{a'}\big)'(x)\,|\xi_{\pm}(x)|^{-1}\Big]\,|\xi_{\pm}(x)|^{1/2}\,e^{\frac{i}{h}\,S_{\pm}(a',x;h)}\,e^{\mp i\,\frac{\pi}{4}}
\end{equation*}
Let
\begin{equation}\label{QRFP4}
F^{a}_{\pm}(x;h):=\frac{i}{h}[P,\chi^{a}]_{\pm}u^{a}(x;h)
\end{equation}
Then, mod $\mathcal{O}(h^{2})$
\begin{equation*}
F^{a}_{\pm}(x;h)=\pm \big(\chi_{1}^{a}\big)'(x)\,|\xi_{\pm}(x)|^{1/2}\,e^{\frac{i}{h}\,S_{\pm}(a,x;h)}\,e^{\pm i\,\frac{\pi}{4}}-\frac{i\,h}{2}\,
\frac{d}{dx}\Big[\big(\chi_{1}^{a}\big)'(x)\,|\xi_{\pm}(x)|^{-1}\Big]\,|\xi_{\pm}(x)|^{1/2}\,e^{\frac{i}{h}\,S_{\pm}(a,x;h)}\,e^{\pm i\,\frac{\pi}{4}}
\end{equation*}
Hence
\begin{align*}
\big(F^{a}_{+}-F^{a}_{-}|u^{a}\big)&=\int_{-\infty}^{a}\big(\chi_{1}^{a}\big)'(x)\,dx+\frac{ih}{4}\,\int_{-\infty}^{a}
\frac{d}{dx}\Big[\big(\chi_{1}^{a}\big)'(x)\,|\xi_{-}(x)|^{-1}\Big]\,dx-\frac{ih}{4}\,\int_{-\infty}^{a}
\frac{d}{dx}\Big[\big(\chi_{1}^{a}\big)'(x)\,|\xi_{+}(x)|^{-1}\Big]\,dx=\int_{-\infty}^{a}\big(\chi_{1}^{a}\big)'(x)\,dx\ \\
&=\chi_{1}^{a}(a)-\lim_{x\rightarrow -\infty}\chi_{1}^{a}(x)=1
\end{align*}
Here we have used the fact that
\begin{align*}
\int_{-\infty}^{a}\frac{d}{dx}\Big[\big(\chi_{1}^{a}\big)'(x)\,|\xi_{\pm}(x)|^{-1}\Big]\,dx&=
\Big[\frac{\big(\chi_{1}^{a}\big)'(x)}{|\xi_{\pm}(x)|}\Big]_{-\infty}^{a}=\lim_{x\rightarrow a^{-}}\frac{\big(\chi_{1}^{a}\big)'(x)}{|\xi_{\pm}(x)|}-\lim_{x\rightarrow -\infty}\frac{\big(\chi_{1}^{a}\big)'(x)}{|\xi_{\pm}(x)|}\ \\
&=\lim_{x\rightarrow a^{-}}\frac{-2\,\big(\chi_{1}^{a}\big)''(x)\,\big(E-V(x)\big)^{1/2}}{V'(x)}=0
\end{align*}
Note that the mixed terms such as $(F^{a}_{\pm}|u^{a}_{\mp})$ are $\mathcal{O}(h^\infty)$ because the phase is non-stationary, and hence the microlocal solution $u^{a}$ is normalized modulo $\mathcal{O}(h^{2})$. In the same way, with
\begin{equation}\label{QRFP5}
\begin{aligned}
F^{a'}_{\pm}(x;h)&:=\frac{i}{h}[P,\chi^{a'}]_{\pm}\,u^{a'}(x,h)\ \\
                 &=\pm \big(\chi_{1}^{a'}\big)'(x)\,|\xi_{\pm}(x)|^{1/2}\,e^{\frac{i}{h}\,S_{\pm}(a',x;h)}\,e^{\mp i\,\frac{\pi}{4}}-\frac{i\,h}{2}\,
\frac{d}{dx}\Big[\big(\chi_{1}^{a'}\big)'(x)\,|\xi_{\pm}(x)|^{-1}\Big]\,|\xi_{\pm}(x)|^{1/2}\,e^{\frac{i}{h}\,S_{\pm}(a',x;h)}\,e^{\mp i\,\frac{\pi}{4}}
\end{aligned}
\end{equation}
the functions $F^{a}_+-F^{a}_-$ and $F^{a'}_+-F^{a'}_-$ span the microlocal co-kernel $\mathcal{K}^{*}_{h}(E)$ of $P-E$ in $(x,\xi)\in ]a',a[\times \mathbb{R}$. Microlocal solutions $u^{a}$ and $u^{a'}$ extend as smooth solutions on the whole interval $]a', a[$; we denote them by $u_{1}$ and $u_{2}$. Since there are no other focal points between $a$ and $a'$, they are expressed by the same formulae (which makes the analysis particularly simple) and satisfy:
$$\big(F^{a}_+-F^{a}_-|u_{1}\big)=1,\quad \big(F^{a'}_+-F^{a'}_-|u_{2}\big)=-1$$
We then compute \big(modulo $\mathcal{O}(h^{2})$\big)
\begin{equation}\label{QRFP6}
\big(F^{a'}_{+}-F^{a'}_{-}|u_{1}\big)=\frac{i}{2}\,\Big(e^{-\frac{i}{h}\,A_{+}(a,a';h)}-e^{-\frac{i}{h}\,A_{-}(a,a';h)}\Big)
\end{equation}
\big(taking again into account that the mixed terms are $\mathcal{O}(h^{\infty})$\big). Similarly
\begin{equation}\label{QRFP7}
\big(F^{a}_{+}-F^{a}_{-}|u_{2})=\frac{i}{2}\,\Big(e^{\frac{i}{h}\,A_{+}(a,a';h)}-e^{\frac{i}{h}\,A_{-}(a,a';h)}\Big)
\end{equation}
where we have set
\begin{equation}\label{QRFP8}
A_{\pm}(a,a';h):=S_{\pm}(a,x;h)-S_{\pm}(a',x;h)=\int_{a}^{a'}\xi_{\pm}(x)\,dx+
h^{2}\,\int_{a}^{a'}T_{1}\big(\xi_{\pm}(x)\big)\,\xi'_{\pm}(x)\,dx
\end{equation}
So we form Gram matrix
\begin{equation}\label{QRFP9}
G^{(a,a')}(E):=\left(
       \begin{array}{cc}
         \big(F^{a}_{+}-F^{a}_{-}|u_{1}\big)   &  \big(F^{a}_{+}-F^{a}_{-}|u_{2}\big) \\
         \big(F^{a'}_{+}-F^{a'}_{-}|u_{1}\big) & \big(F^{a'}_{+}-F^{a'}_{-}|u_{2}\big)\\
       \end{array}
     \right)
\end{equation}
whose determinant
$$D(E;h):=-\cos^{2}\Big(\frac{A_{-}(a,a';h)-A_{+}(a,a';h)}{2h}\Big)$$
vanishes precisely at the eigenvalues of $P$ in $I$, which yields, modulo $\mathcal{O}(h^{3})$,
\begin{equation}\label{QRFP10}
\int_{a'}^{a}\big(\xi_{+}(x)-\xi_{-}(x)\big)\,dx-\pi\,h
+h^{2}\,\int_{a'}^{a}\Big(T_{1}\big(\xi_{+}(x)\big)\,\xi'_{+}(x)-
T_{1}\big(\xi_{-}(x)\big)\,\xi'_{-}(x)\Big)\,dx=2n\pi h,\quad n\in\mathbb{Z}
\end{equation}
We next simplify (\ref{QRFP10}). First, we have
$$\int_{a'}^{a}\big(\xi_{+}(x)-\xi_{-}(x)\big)\,dx=\oint_{\gamma_{E}} \big(E-V(x)\big)^{1/2}\,dx$$
Moreover,
\begin{align*}
\int_{a'}^{a}\Big(T_{1}\big(\xi_{+}(x)\big)\,\xi'_{+}(x)-
T_{1}\big(\xi_{-}(x)\big)\,\xi'_{-}(x)\Big)\,dx&=\oint_{\gamma_{E}}T_{1}\big(\xi(x)\big)\,\xi'(x)\,dx\ \\
&=\mathrm{Im}\oint_{\gamma_{E}}\Omega_{_{1}}\big(\xi(x)\big)\ \\
&=\frac{1}{12}\,\frac{d}{dE}\int_{0}^{T(E)}V''\big(x(t)\big)\,dt
\end{align*}
Hence, the Bohr–Sommerfeld quantization rule up to order $h^{2}$ takes the form
\begin{equation}\label{QRFP11}
\oint_{\gamma_{E}}\big(E-V(x)\big)^{1/2}\,dx-\pi\,h
+\frac{h^{2}}{12}\,\frac{d}{dE}\int_{\gamma_{E}}V''\big(x(t)\big)\,dt
=2\,\pi\,n\,h+\mathcal{O}(h^{3})
\end{equation}
\section{Bohr-Sommerfeld and Action-Angle Variables}\label{ACTIO0NANGLE}
In this section we present a simple approach based on the Birkhoff normal form and the monodromy operator (see \cite{HR84}). Assume that the principal symbol $p_{0}$ of $P$ has a periodic orbit $\gamma_{0}\subset \mathcal{A}$ with period $2\pi$ at a non-critical energy $E=E_{0}$. It is known from \cite{Argyres} that there exists a smooth canonical transformation
$$(t,\tau)\mapsto \kappa (t,\tau)=(x,\xi),\quad t\in [0,2 \pi],$$
defined in a neighborhood of $\gamma_{0}$, and a smooth function $\tau\mapsto f_{0}(\tau)$, with $f_{0}(0)=0$ and $f'_{0}(0)=1$, such that
\begin{equation}\label{FNBK0}
p_{0}\circ \kappa(t,\tau)=f_{0}(\tau).
\end{equation}
This transformation is given by the generating function
$$S(\tau,x)=\int_{x_{E}}^{x}\xi(y),\quad \xi(x)=\partial_{x}S(\tau,x),\quad \varphi=\partial_{\tau}S(\tau,x),$$
with
$$p_{0}\big(x,\partial_{x}S(\tau,x)\big)=f_{0}(\tau).$$
The energy $E$ and the momentum $\tau$ are related by the transformation $E=f_{0}(\tau)$, with $f'_{0}(E_{0})=1$. The canonical map $\kappa$ can thus be quantized so as to reduce $P$ to its semi-classical Birkhoff normal form (BNF). Namely, there exists a microlocal unitary $h$-Fourier integral operator $U$ associated with $\kappa$, and a classical symbol
$$f(\tau;h)=f_{0}(\tau)+h\,f_{1}(\tau)+h^{2}\,f_{2}(\tau)+\cdots,$$
such that
$$U^{*} P U=f(hD_{t};h),$$
which is an $h$-PDO with constant coefficients, where $hD_{t}$ quantizes $\tau$. We do not provide explicit expressions for $f_{j}(\tau), j\geq 1$, in terms of $p_{j}$, but note that $f_{j}$ depends linearly on $p_{0}, p_{1}, \cdots, p_{j}$ and their derivatives. Of course, BNF allows to get rid of focal points. The section $t=0$ in $\{f_{0}(\tau)=E\}=f_{0}^{-1}(E)$ (the Poincaré section) reduces to a single point $\Sigma=\{a_{E}\}$. Recall from \big(\cite{HL1MR1}, \cite{HL2MR2}\big) that Poisson operator $\mathcal{K}(t,E;h)$ here solves (globally near $\gamma_{0}$), the eigenvalue equation
\begin{equation}\label{weyL1}
\big(f(hD_{t};h)-E\big)\,\mathcal{K}(t,E;h)=0,
\end{equation}
and is given in the special 1-D case by the multiplication operator on $L^{2}(\Sigma)\approx \mathbb{C}^{2}$:
\begin{equation*}
\mathcal{K}(t,E)=e^{\frac{i}{h}\,S(t,E)}\,a(t,E;h),
\end{equation*}
where $S(t,E)$ satisfies the eikonal equation $f_{0}\big(\partial_{t}S(t,E)\big)=E, \,S(0,E)=0$, i.e. $S(t,E)=f_{0}^{-1}(E)\,t=\tau\,t$.

The action of the Schrödinger operator on the oscillatory function $x\mapsto a(x;h)\,e^{\frac{i}{h}\,S(x)}$ is given by
\begin{equation}\label{AActionSCHROScii}
e^{\frac{i}{h}\,S(x)}\,\big(P-E\big)\big(a(x;h)\,e^{\frac{i}{h}\,S(x)}\big)=
\Big(p\big(x,S'(x);h\big)-E\Big)\,a(x;h)-i\,h\,\Big(\beta(x;h)\,\partial_{x}a(x;h)+\frac{1}{2}\,\partial_{x}\beta(x;h)\,a(x;h)\Big)-
h^{2}\,\frac{\partial^{2}a(x;h)}{\partial x^{2}},
\end{equation}
where
$$\beta(x;h)=(\frac{\partial p}{\partial \xi})\big(x,S'(x)\big)$$
In action–angle coordinates this reads
\begin{equation}\label{AActionSCHROSciangle}
e^{-\frac{i}{h}\,S}\,\big(f(hD_{t};h)-E\big)(a\,e^{\frac{i}{h}\,S})=\big(f(\partial_{t} S;h)-E\big)\,a
-i\,h\,\big(\frac{\partial f}{\partial \tau}\,\frac{\partial a}{\partial t}+\frac{1}{2}\,\frac{\partial^{2} f}{\partial \tau^{2}}\,\frac{\partial^{2}S}{\partial t^{2}}\,a\big)-\frac{h^{2}}{2}\,\frac{\partial^{2} f}{\partial \tau^{2}}\,\frac{\partial^{2}a}{\partial t^{2}}
\end{equation}
The phase function $S(t,E)$ is real, smooth, and must satisfy the eikonal equation $f_{0}\big(\partial_{t}S(t,E)\big)=E$, $S(0,E)=0$, i.e. $S(t,E)=f_{0}^{-1}(E)\,t=\tau\,t$. Moreover, $a(t,E;h)=a_{0}(t,E)+h\,a_{1}(t,E)+\cdots$ solves transport equations to any orders in $h$. It then follows that
\begin{equation}\label{AActionSCHROSci01angle}
e^{-\frac{i}{h}\tau t}\,\big(f(hD_{t};h)-E\big)\big(a(t,E;h)\,e^{-\frac{i}{h} \tau t}\big)=
\big(f(\tau;h)-E\big)\,a(t,E;h)-i\,h\,\frac{\partial f(\tau;h)}{\partial \tau}\,\frac{\partial a(t,E;h)}{\partial t}-\frac{h^{2}}{2}\,\frac{\partial^{2} f(\tau;h)}{\partial \tau^{2}}\,\frac{\partial^{2}a(t,E;h)}{\partial t^{2}}
\end{equation}
If the eikonal equation is satisfied, by eliminating the $h$ term in (\ref{AActionSCHROSci01angle}), we obtain the first transport equation \begin{equation}\label{44TRANSPoRT1}
i\,f_{1}(\tau)\,a_{0}(t,E)+\,f'_{0}(\tau)\,\frac{\partial a_{0}(t,E)}{\partial t}=0
\end{equation}
whose solutions are the functions
\begin{equation}\label{Soluu0}
a_{0}(t,E)=C_{0}(E)\,e^{-i t f_{1}(\tau)/f'_{0}(\tau)},\quad C_{0}(E)\in \mathbb{R}
\end{equation}
By eliminating the $h^{2}$ term in (\ref{AActionSCHROSci01angle}), we obtain the second transport equation
\begin{equation}\label{TRANSPoRT12}
i\,f_{1}(\tau)\,a_{1}(t,E)+\,f'_{0}(\tau)\,\frac{\partial a_{1}(t,E)}{\partial t}=
-f'_{1}(\tau)\,\frac{\partial a_{0}(t,E)}{\partial t}-i\,f_{2}(\tau)\,
a_{0}(t,E)+\frac{i}{2}\,f''_{0}(\tau)\,\frac{\partial^{2} a_{0}(t,E)}{\partial t^{2}}
\end{equation}
The homogeneous equation associated with (\ref{TRANSPoRT12}) is the same as (\ref{44TRANSPoRT1}), whose solutions are the functions
$$t\mapsto C_{1}(E)\,e^{-i\,t\,f_{1}(\tau)/f'_{0}(\tau)},\quad C_{1}(E)\in \mathbb{R}$$
Thus, we seek a particular solution of (\ref{TRANSPoRT12}) of the form $t\mapsto D_{1}(t,E)\,e^{-i t f_{1}(\tau)/f'_{0}(\tau)}$. Using the method of variation of constants, we find:
\begin{equation}\label{D1tE}
D_{1}(t,E)=i\,C_{0}(E)\,\widetilde{S}_{2}(E)\,t,\quad \widetilde{S}_{2}(E)=\Big[\frac{1}{2}\,\big(\frac{f_{1}^{2}}{f'_{0}}\big)'(\tau)-f_{2}(\tau)\Big]/f'_{0}(\tau)
\end{equation}
which we normalize by setting $D_{1}(0,E)=0$, so that the general solution to (\ref{TRANSPoRT12}) is
\begin{equation}\label{a1peTITD1tE}
a_{1}(t,E)=\big(C_{1}(E)+i\,C_{0}(E)\,\widetilde{S}_{2}(E)\,t\big)\,e^{-i t f_{1}(\tau)/f'_{0}(\tau)}
\end{equation}
It follows that
\begin{equation}\label{aSOLpeTITD1tE}
a(t,E;h)=\big(C_{0}(E)+h\,C_{1}(E)+i\,h\,C_{0}(E)\,\widetilde{S}_{2}(E)\,t\big)\,e^{i t \widetilde{S}_{1}(E)},\quad \widetilde{S}_{1}(E)=-f_{1}(\tau)/f'_{0}(\tau)
\end{equation}
and therefore
\begin{equation}\label{KTE}
\mathcal{K}(t,E)=e^{it\big(\tau+h \widetilde{S}_{1}(E)\big)/h}\,\big(C_{0}(E)+h\,C_{1}(E)+i\,h\,C_{0}(E)\,\widetilde{S}_{2}(E)\,t\big)
\end{equation}
We define the adjoint $\mathcal{K}^{*}(t,E;h)$ of $\mathcal{K}(t,E;h)$ by
$$\mathcal{K}^{*}(t,E)=a^{*}(t,E;h)\,e^{-\frac{i}{h}\,S(t,E)}=\overline{a(t,E;h)}\,e^{-\frac{i}{h}\,S(t,E)}$$
and
$$\mathcal{K}^{*}(E)\mathbf{\cdot}=\int\mathcal{K}^{*}(t,E)\mathbf{\cdot}\,dt$$
The "flux norme" on $\mathbb{C}^{2}$, is defined by:
\begin{equation}\label{OKkpK01}
(u|v)_{\chi}=\Big(\frac{i}{h}\,\big[f(hD_{t};h), \chi(t)\big]\,\mathcal{K}(t,E)u|\mathcal{K}(t,E)v\Big)
\end{equation}
with the scalar product of $L^{2}(\mathbb{R}_{t})$ on the RHS, and $\chi \in C^{\infty}(\mathbb{R})$ is a smooth step-function, equal to 0 for $t\leq 0$ an 1 for $t\geq 2\pi$. To normalize the solution $\mathcal{K}(t,E)$, we start from:
$$\mathcal{K}^{*}(E)\,\frac{i}{h}\,[f(hD_{t};h), \chi(t)]\,\mathcal{K}(t,E)=\mathrm{Id}_{L^{2}(\mathbb{R})}$$
It is known that
\begin{equation}\label{OKjjK03}
\frac{i}{h}\,[f(hD_{t};h),\chi(t)]=-i\,h\,\chi''(t)+2\,\chi'(t)\,hD_{t}
\end{equation}
A fairly straightforward calculation shows that
\begin{equation}\label{}
I(t,E):=\frac{i}{h}\,\big[f(hD_{t};h),\chi(t)\big]\,\mathcal{K}(t,E)
=\Big(-i\,h\,\chi''(t)\,a(t,E;h)+2\,\chi'(t)\,\tau\,a(t,E;h)-2\,i\,h\,\chi'(t)\,\frac{\partial a(t,E;h)}{\partial t}\Big)\,e^{\frac{i}{h}\,S(t,E)}
\end{equation}
Hence
\begin{equation*}
(u|v)_{\chi}=\big(I(t,E)\,u|\mathcal{K}(t,E)\,v\big)=u\,\overline{v}\,\int_{0}^{2\,\pi}I(t,E)\,\overline{\mathcal{K}(t,E)}\,dt
=u\,\overline{v}\,\Big[2\,\tau\,C^{2}_{0}(E)+h\,\big(4\,C_{0}(E)\,C_{1}(E)\,\tau+2\,C^{2}_{0}(E)\,\widetilde{S_{1}}(E)\big)+\mathcal{O}(h^{2})\Big]
\end{equation*}
We should therefore have:
$$2\,\tau\,C_0^2(E)=1$$
If we choose $C_0>0$, we will have
\begin{equation}\label{OKjjK0hq4}
C_{0}(E)=\big(2\,\tau\big)^{-1/2},\quad C_{1}(E)=\frac{f_{1}(\tau)}{(2\tau)^{3/2}f'_{0}(\tau)}.
\end{equation}
For the found values of $C_0(E), C_1(E)$, we indeed have $(u|v)_\chi=u\,\overline{v}$, and we thus obtain a $\mathcal{K}(t,E)$ normalized to all orders in $h$.

We define $\mathcal{K}_{0}(t,E)=\mathcal{K}(t,E)$ (Poisson operator at $t=0$), and $\mathcal{K}_{2\,\pi}(t,E)=\mathcal{K}_{0}(t-2\,\pi,E)$ (Poisson operator at $t=2\,\pi$). $E$ is an eigenvalue of the operator $f(hD_{t};h)$ if and only if 1 is an eigenvalue of the Monodromy operator
$$M(E)=\mathcal{K}^{*}_{2\,\pi}(E)\,I(t,E)=\int\mathcal{K}^{*}_{0}(t-2\,\pi,E)\,I(t,E)\,dt$$
Note that in dimension 1, the Monodromy operator $M(E)$ reduces to a multiplication operator. It is then shown that
\begin{equation}\label{}
\begin{aligned}
M(E)&=\exp\Big[\frac{i}{h}\,\big(2\,\pi\,\tau+2\,\pi\,h\,\widetilde{S}_{1}(E)\big)\Big]\,\int_{0}^{2\pi}\big[\chi'(t)+2i\pi h \widetilde{S}_{2}(E)\chi'(t)-ihC_{0}^{2}\chi''(t)+\mathcal{O}(h^{2})\big]\,dt\ \\
    &=\exp\Big[\frac{i}{h}\,\big(2\,\pi\,\tau+2\,\pi\,h\,\widetilde{S}_{1}(E)+2\,\pi\,h^{2}\,\widetilde{S}_{2}(E)\big)\Big]\,\big(1+\mathcal{O}(h^{2})\big)
\end{aligned}
\end{equation}
The Bohr-Sommerfeld quantization rule is written as:
\begin{equation}\label{}
f_{0}^{-1}(E)+h\,\widetilde{S_{1}}(E)+h^{2}\,\widetilde{S_{2}}(E)+\mathcal{O}(h^{3})=n\,h,\quad n\in \mathbb{Z}
\end{equation}
Let $S_{1}(E)=2\,\pi\,\widetilde{S}_{1}(E)$, $S_{2}(E)=2\,\pi\,\widetilde{S}_{2}(E)$. Then, since $f_{0}^{-1}(E)=\tau(E)=\displaystyle \frac{1}{2\,\pi}\,\oint_{\gamma_{E}}\xi(x)\,dx=\displaystyle \frac{1}{2\,\pi}\,S_{0}(E)$, and moreover we know that $S_{3}(E)=0$, we obtain:
\begin{equation}\label{BSACANG}
S_{0}(E)+h\,S_{1}(E)+h^{2}\,S_{2}(E)+\mathcal{O}(h^{4})=2\,\pi\,n\,h,\quad n\in \mathbb{Z}
\end{equation}
with
\begin{equation}\label{}
S_{1}(E)=-2\pi\,\big(\frac{f_{1}}{f'_{0}}\big)(\tau),\quad S_{2}(E)=2\pi\,\Big[\frac{1}{2}\,\big(\frac{f_{1}^{2}}{f'_{0}}\big)'(\tau)-f_{2}(\tau)\Big]/f'_{0}(\tau)
\end{equation}
Using equation (\ref{QRFP11}), we have for the Schrödinger operator:
$$f_{1}(\tau)=\frac{1}{2}\,f'_{0}(\tau),\quad f_{2}(\tau)=\frac{1}{8}\,f''_{0}(\tau)+\frac{f'_{0}(\tau)}{24\pi}\,\frac{d}{dE}\int_{\gamma_{E}}V''\big(x(t)\big)\,dt$$
\begin{rem}
For the harmonic oscillator, we have: $V(x)=\displaystyle\frac{1}{2}\,x^{2}$, so $V''(x)=1$ and the derivative with respect to $E$ of the integral is zero, and consequently $f_{2}(\tau)=0$. Then
\begin{equation}\label{}
f_{0}(\tau)=\tau,\quad f_{1}(\tau)=\frac{1}{2},\quad f_{2}(\tau)=0
\end{equation}
The Bohr-Sommerfeld quantization rule for the harmonic oscillator is
\begin{equation}\label{}
\tau\big(E_{n}(h)\big)=h(n+\frac{1}{2})
\end{equation}
Since $\tau(E)=E$, we obtain
\begin{equation}\label{}
E_{n}(h)=h(n+\frac{1}{2})
\end{equation}
This is a special case where semi-classical quantization is exact (no higher-order correction).
\end{rem}
\nocite{*}
\bibliographystyle{plain}
\bibliography{SCHRODINGER(BS)}
\addcontentsline{toc}{chapter}{Bibliographie}
\end{document}